\documentclass[journal,dvipsnames]{IEEEtran}
\usepackage{graphicx,url}
\graphicspath{ {./images/} }
\usepackage{amssymb,amsmath,mathtools}
\usepackage{tikz,pgfplots}
\usepackage{pgfplots}
\usepackage{pgfplotstable}
\usetikzlibrary{pgfplots.groupplots}
\usepgfplotslibrary{colorbrewer}
\pgfplotsset{compat = 1.15, cycle list/Set1-8}
\usetikzlibrary{pgfplots.statistics, pgfplots.colorbrewer}
\usetikzlibrary{pgfplots.groupplots}
\usetikzlibrary{shapes.geometric,backgrounds,patterns, trees}
\usetikzlibrary{3d,decorations.text,shapes.arrows,positioning,fit,backgrounds}
\usetikzlibrary{positioning, decorations.pathmorphing, shapes}
\usetikzlibrary{decorations.pathreplacing}
\usetikzlibrary{shapes.geometric,backgrounds,patterns, trees}
\usetikzlibrary{spy}
\usetikzlibrary{arrows.meta,
                bending,
                intersections,
                quotes,
                shapes.geometric}
              \usetikzlibrary{automata, positioning}
              \usepgfplotslibrary{fillbetween}
\usetikzlibrary{shapes,arrows}
\usetikzlibrary{arrows.meta}
\usetikzlibrary{positioning}
\tikzset{set/.style={draw,circle,inner sep=0pt,align=center}}
\usetikzlibrary{automata, positioning}
  \usetikzlibrary{shapes,shadows}
  \tikzstyle{abstractbox} = [draw=black, fill=white, rectangle,
  inner sep=10pt, style=rounded corners, drop shadow={fill=black,
  opacity=1}]
\tikzstyle{abstracttitle} =[fill=white]
\usetikzlibrary{calc,positioning,shapes.geometric}
\usetikzlibrary{arrows.meta,arrows}
\usetikzlibrary{matrix}

\colorlet{myRed}{red!20}
\tikzset{
  rows/.style 2 args={/utils/temp/.style={row ##1/.append style={nodes={#2}}},
    /utils/temp/.list={#1}},
  columns/.style 2 args={/utils/temp/.style={column ##1/.append style={nodes={#2}}},
    /utils/temp/.list={#1}}}
\usetikzlibrary{backgrounds,calc,shadings,shapes.arrows,shapes.symbols,shadows}
\definecolor{switch}{HTML}{006996}
\pgfkeys{/pgf/.cd,
  parallelepiped offset x/.initial=2mm,
  parallelepiped offset y/.initial=2mm
}
\pgfdeclareshape{parallelepiped}
{
  \inheritsavedanchors[from=rectangle] 
  \inheritanchorborder[from=rectangle]
  \inheritanchor[from=rectangle]{north}
  \inheritanchor[from=rectangle]{north west}
  \inheritanchor[from=rectangle]{north east}
  \inheritanchor[from=rectangle]{center}
  \inheritanchor[from=rectangle]{west}
  \inheritanchor[from=rectangle]{east}
  \inheritanchor[from=rectangle]{mid}
  \inheritanchor[from=rectangle]{mid west}
  \inheritanchor[from=rectangle]{mid east}
  \inheritanchor[from=rectangle]{base}
  \inheritanchor[from=rectangle]{base west}
  \inheritanchor[from=rectangle]{base east}
  \inheritanchor[from=rectangle]{south}
  \inheritanchor[from=rectangle]{south west}
  \inheritanchor[from=rectangle]{south east}
  \backgroundpath{
    \southwest \pgf@xa=\pgf@x \pgf@ya=\pgf@y
    \northeast \pgf@xb=\pgf@x \pgf@yb=\pgf@y
    \pgfmathsetlength\pgfutil@tempdima{\pgfkeysvalueof{/pgf/parallelepiped
      offset x}}
    \pgfmathsetlength\pgfutil@tempdimb{\pgfkeysvalueof{/pgf/parallelepiped
      offset y}}
    \def\ppd@offset{\pgfpoint{\pgfutil@tempdima}{\pgfutil@tempdimb}}
    \pgfpathmoveto{\pgfqpoint{\pgf@xa}{\pgf@ya}}
    \pgfpathlineto{\pgfqpoint{\pgf@xb}{\pgf@ya}}
    \pgfpathlineto{\pgfqpoint{\pgf@xb}{\pgf@yb}}
    \pgfpathlineto{\pgfqpoint{\pgf@xa}{\pgf@yb}}
    \pgfpathclose
    \pgfpathmoveto{\pgfqpoint{\pgf@xb}{\pgf@ya}}
    \pgfpathlineto{\pgfpointadd{\pgfpoint{\pgf@xb}{\pgf@ya}}{\ppd@offset}}
    \pgfpathlineto{\pgfpointadd{\pgfpoint{\pgf@xb}{\pgf@yb}}{\ppd@offset}}
    \pgfpathlineto{\pgfpointadd{\pgfpoint{\pgf@xa}{\pgf@yb}}{\ppd@offset}}
    \pgfpathlineto{\pgfqpoint{\pgf@xa}{\pgf@yb}}
    \pgfpathmoveto{\pgfqpoint{\pgf@xb}{\pgf@yb}}
    \pgfpathlineto{\pgfpointadd{\pgfpoint{\pgf@xb}{\pgf@yb}}{\ppd@offset}}
  }
}

\makeatletter
\tikzset{anchor/.append code=\let\tikz@auto@anchor\relax,
  add font/.code=%
    \expandafter\def\expandafter\tikz@textfont\expandafter{\tikz@textfont#1},
  left delimiter/.style 2 args={append after command={\tikz@delimiter{south east}
    {south west}{every delimiter,every left delimiter,#2}{south}{north}{#1}{.}{\pgf@y}}}}
\tikzstyle{sms} = [rectangle callout, draw,very thick, rounded corners, minimum height=20pt]
\makeatletter
\tikzset{anchor/.append code=\let\tikz@auto@anchor\relax,
  add font/.code=%
    \expandafter\def\expandafter\tikz@textfont\expandafter{\tikz@textfont#1},
  left delimiter/.style 2 args={append after command={\tikz@delimiter{south east}
    {south west}{every delimiter,every left delimiter,#2}{south}{north}{#1}{.}{\pgf@y}}}}
\tikzstyle{sms} = [rectangle callout, draw,very thick, rounded corners, minimum height=20pt]
\usetikzlibrary{positioning,calc}
\tikzstyle{block} = [rectangle, draw,
text width=10.5em, text centered, rounded corners, minimum height=4em]
\tikzstyle{line} = [draw, -latex]
\tikzset{l3 switch/.style={
    parallelepiped,fill=switch, draw=white,
    minimum width=0.75cm,
    minimum height=0.75cm,
    parallelepiped offset x=1.75mm,
    parallelepiped offset y=1.25mm,
    path picture={
      \node[fill=white,
        circle,
        minimum size=6pt,
        inner sep=0pt,
        append after command={
          \pgfextra{
            \foreach \angle in {0,45,...,360}
            \draw[-latex,fill=white] (\tikzlastnode.\angle)--++(\angle:2.25mm);
          }
        }
      ]
       at ([xshift=-0.75mm,yshift=-0.5mm]path picture bounding box.center){};
    }
  },
  ports/.style={
    line width=0.3pt,
    top color=gray!20,
    bottom color=gray!80
  },
  rack switch/.style={
    parallelepiped,fill=white, draw,
    minimum width=1.25cm,
    minimum height=0.25cm,
    parallelepiped offset x=2mm,
    parallelepiped offset y=1.25mm,
    xscale=-1,
    path picture={
      \draw[top color=gray!5,bottom color=gray!40]
      (path picture bounding box.south west) rectangle
      (path picture bounding box.north east);
      \coordinate (A-west) at ([xshift=-0.2cm]path picture bounding box.west);
      \coordinate (A-center) at ($(path picture bounding box.center)!0!(path
        picture bounding box.south)$);
      \foreach \x in {0.275,0.525,0.775}{
        \draw[ports]([yshift=-0.05cm]$(A-west)!\x!(A-center)$)
          rectangle +(0.1,0.05);
        \draw[ports]([yshift=-0.125cm]$(A-west)!\x!(A-center)$)
          rectangle +(0.1,0.05);
       }
      \coordinate (A-east) at (path picture bounding box.east);
      \foreach \x in {0.085,0.21,0.335,0.455,0.635,0.755,0.875,1}{
        \draw[ports]([yshift=-0.1125cm]$(A-east)!\x!(A-center)$)
          rectangle +(0.05,0.1);
      }
    }
  },
  server/.style={
    parallelepiped,
    fill=white, draw,
    minimum width=0.35cm,
    minimum height=0.75cm,
    parallelepiped offset x=3mm,
    parallelepiped offset y=2mm,
    xscale=-1,
    path picture={
      \draw[top color=gray!5,bottom color=gray!40]
      (path picture bounding box.south west) rectangle
      (path picture bounding box.north east);
      \coordinate (A-center) at ($(path picture bounding box.center)!0!(path
        picture bounding box.south)$);
      \coordinate (A-west) at ([xshift=-0.575cm]path picture bounding box.west);
      \draw[ports]([yshift=0.1cm]$(A-west)!0!(A-center)$)
        rectangle +(0.2,0.065);
      \draw[ports]([yshift=0.01cm]$(A-west)!0.085!(A-center)$)
        rectangle +(0.15,0.05);
      \fill[black]([yshift=-0.35cm]$(A-west)!-0.1!(A-center)$)
        rectangle +(0.235,0.0175);
      \fill[black]([yshift=-0.385cm]$(A-west)!-0.1!(A-center)$)
        rectangle +(0.235,0.0175);
      \fill[black]([yshift=-0.42cm]$(A-west)!-0.1!(A-center)$)
        rectangle +(0.235,0.0175);
    }
  },
}

\usetikzlibrary{calc, shadings, shadows, shapes.arrows}
\tikzset{cross/.style={cross out, draw=black, minimum size=2*(#1-\pgflinewidth), inner sep=0pt, outer sep=0pt},
cross/.default={1pt}}
\tikzset{%
  interface/.style={draw, rectangle, rounded corners, font=\LARGE\sffamily},
  ethernet/.style={interface, fill=yellow!50},
  serial/.style={interface, fill=green!70},
  speed/.style={sloped, anchor=south, font=\large\sffamily},
  route/.style={draw, shape=single arrow, single arrow head extend=4mm,
    minimum height=1.7cm, minimum width=3mm, white, fill=switch!20,
    drop shadow={opacity=.8, fill=switch}, font=\tiny}
}

\usepackage{float}
\definecolor{bluetwo}{RGB}{189, 213, 234}
\definecolor{bluethree}{RGB}{165, 193, 224}
\definecolor{bluefour}{RGB}{141, 169, 200}

\usepackage{amsthm}
\theoremstyle{plain}
\newtheorem{proposition}{Proposition}

\newcommand\norm[1]{\lVert#1\rVert}
\newtheorem{remark}{Remark}
\usepackage{booktabs}
\usepackage{colortbl}
\definecolor{lightgray}{gray}{0.9}
\allowdisplaybreaks

\begin{document}

\title{Recovery Control in Replicated Systems\\through Autonomous Multiagent Rollout}

\author{Kim~Hammar and
        Yuchao~Li
\thanks{This research is supported by the Swedish Research Council; 2024-06436.}
\thanks{K. Hammar is with the Department of Computing, Imperial College, London, United Kingdom, \texttt{k.hammar@imperial.ac.uk}.}
\thanks{Y. Li is with the Fulton School of Engineering, Arizona State University, Tempe, AZ. \texttt{yuchaoli@asu.edu}.}
\thanks{Manuscript received Month Day, 2026.}}

{

\maketitle

\begin{abstract}
We study recovery control in replicated computing systems. Such systems consist of replicas that collectively provide a service to a client population. This redundancy enables the system to withstand failures provided that failed replicas are recovered faster than new failures occur. We show that the problem of deciding when to initiate recovery of selected replicas can be formulated as a partially observable Markov decision problem (POMDP) with a multiagent structure. We exploit this structure to apply a multiagent rollout method for approximating optimal control policies. Our method uses precomputed signaling information that reduces the need for replica coordination and facilitates parallel computations. Experiments show that our method scales to systems with up to 70 replicas and reduces costs compared to the recovery policies currently used in practice.
\end{abstract}

\begin{IEEEkeywords}
Reinforcement learning, multiagent, recovery.
\end{IEEEkeywords}

\IEEEpeerreviewmaketitle

\section{Introduction}
\IEEEPARstart{A}{s} our reliance on on-line services grows, there is an increasing demand for reliable systems that provide correct service without disruption. Research on fault-tolerant systems has almost a century-long history, with the seminal work by von Neumann \cite{vN56}, and Moore and Shannon \cite{MOORE1956191} in 1956. The early work focused on tolerance against hardware failures. Since then, the field has broadened to include tolerance against software bugs, operator mistakes, and cyberattacks; see e.g., Lamport et al. \cite{byzantine_generals_lamport} and Avizienis \cite{avivzienis1967design}. The common approach to building a fault-tolerant service is \textit{redundancy}, whereby the service is provided by a set of \textit{service replicas}. Through such redundancy, failed replicas can be substituted by healthy replicas as long as they can coordinate their service responses. This coordination problem is known as \textit{consensus}.\footnote{Here we refer to the classical notion of consensus from distributed systems theory, where a finite set of nodes must eventually agree on a value despite failures; see e.g., Lamport et al. \cite{byzantine_generals_lamport}. This is distinct from the notion of consensus studied in control theory, which focuses on states gradually converging to the same value over time; see e.g., Olfati-Saber et al. \cite{4118472}.}

Given a suitable consensus protocol, a distributed system with $N$ service replicas can tolerate up to $f < N$ faulty replicas, where a faulty replica can behave arbitrarily, i.e., Byzantine. For example, a faulty replica can stop responding to service requests (e.g., due to a power outage) or send incorrect responses to clients (e.g., due to a cyberattack). Tolerating failures in this setting means that despite the presence of up to $f$ faulty replicas, the system as a whole is still guaranteed to provide correct responses to service requests by clients.

While the system can only tolerate $f$ replicas being faulty \textit{simultaneously}, it can withstand \textit{any} number of failures provided that replicas are recovered quickly enough. However, deciding when to recover replicas is challenging because it requires balancing recovery costs against service requirements. Moreover, the failure state of a replica is often not fully observable due to limited monitoring capabilities. In addition, failures are not independent events, i.e., observing the failure of one replica increases the likelihood of additional failures. For instance, two replicas may be located in the same geographic region, so a regional outage could affect both simultaneously, or they might run the same operating system and therefore be vulnerable to the same types of cyberattacks.

\begin{figure}
  \centering
\scalebox{0.7}{
    \includegraphics{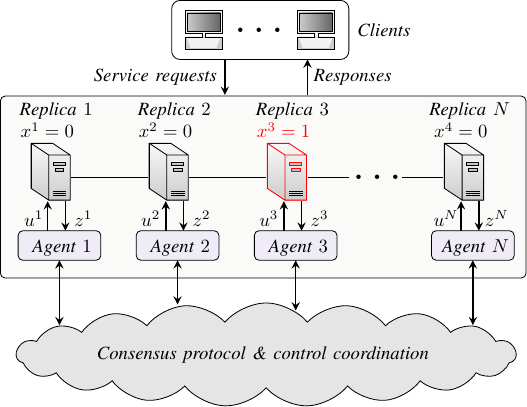}    
  }
\caption{Multiagent recovery control in a system with $N$ replicas that provide a service to clients, e.g., a web service. Each replica $i \in \{1,\hdots,N\}$ has a state $x^i$, which is $1$ if the replica is faulty (e.g., due to a cyberattack or operator mistake); $0$ otherwise. Replicas are coordinated through a consensus protocol that ensures correct service as long as at most $f < N$ replicas are faulty. Each replica $i$ is equipped with an agent that monitors the replica via the observation $z^i$ and applies the recovery control $u^i$.}\label{fig:example}
\end{figure}

Due to these challenges, current approaches to recovery control mostly rely on inefficient recovery policies, such as heuristic rule-based recovery (see e.g., \cite{4429182}) and periodic recovery (see e.g., \cite{pbft,zyzzyva}). While simple to implement, these policies do not adapt well to varying failure patterns and can waste resources or delay recovery. For example, in a periodic recovery scheme, a replica might be recovered every week even if it is functioning correctly, leading to unnecessary recovery costs and/or slow responses to failure events.

In this paper, we address these limitations and present a scalable method for approximating optimal recovery policies. Compared to heuristic approaches, our method can significantly reduce operational costs by avoiding unnecessary recoveries and minimizing service disruption. In particular, we formulate the recovery problem as a partially observable Markov decision problem (POMDP) with a multiagent structure, i.e., a POMDP where the control consists of multiple components. We exploit this structure to apply a multiagent rollout method where each replica is controlled by an \textit{agent} that decides when to initiate recovery based on partial observations; see Fig. \ref{fig:example}. This method is applicable to a broad class of POMDPs and extends beyond the recovery problem studied in this paper.

In the simplest instantiation of our method, each agent decides when to initiate recovery of its replica by performing a lookahead optimization using a base policy (e.g., a periodic recovery policy) to simulate future controls and predict the controls of the agents that have not yet selected their controls. The agents perform these optimizations sequentially (one at a time), where each agent is informed of the controls selected by the preceding agents. This sequential decomposition reduces the complexity of the control minimization from $O(2^N)$ to $O(N)$, where $N$ is the number of replicas. We also present an extension of our method wherein each agent selects its recovery control autonomously based on precomputed \textit{signaling} information. This extension allows the agents to compute their controls in parallel, which enables our method to scale to systems where standard rollout is computationally intractable.

The multiagent rollout formulation that forms the basis of our method was first presented by Bertsekas; see textbooks \cite{bertsekas2021rollout, bertsekas2024reinforcement} and paper \cite{9317713}. It is based on an approach for problem reformulation to trade off control space complexity with state space complexity, first proposed by Bertsekas and Tsitsiklis \cite[Section 6.1.4]{BertsekasTsitsiklis96}. Multiagent rollout methods have since been applied across a wide range of domains, including coordination of warehouse robots \cite{EMANUELSSON20233027}, repair scheduling for fleets of robots \cite{pmlr-v155-bhattacharya21a}, taxi routing \cite{10611063}, resource allocation \cite{10575707}, target tracking \cite{9658603}, antenna tilt optimization \cite{10624777}, multivehicle routing \cite{10.5555/3635637.3663054}, and robotic arm path planning \cite{li2026line}.

These studies highlight the versatility of multiagent rollout for problems where distributed decisions must be coordinated in real time. To our knowledge, we are the first to apply multiagent rollout to the recovery control problem studied in this paper. Moreover, we characterize theoretical conditions under which our method has an approximate cost-improvement property. Two further distinctions are that our method uses signaling and that we evaluate it both in simulation and on a physical testbed. By contrast, the referenced studies are limited to simulation and most of them do not use signaling.

We note that multiagent rollout relates to the theory of teams (see e.g., Marschak \cite{b0f02537-1883-352a-8f1c-e65595229fa6}), decentralized control (see e.g., Sandell et al. \cite{1101704}), and the notion of person-by-person optimality (see e.g., Bauso et al. \cite{4586577}), all of which are well-developed research areas with a long history. For recent developments in these areas, see e.g., Ballotta et al. \cite{10018269}, Doerr et al. \cite{9354997}, Trajanovski et al. \cite{7096986}, and Gattami \cite{7390057}. While related to this paper on a methodological level, all of these references study fundamentally different problems than the distributed recovery control problem studied in this paper.

In the context of distributed recovery control in replicated computing systems, the only prior work that adopts a decision-theoretic formulation rather than a heuristic one is the approach by Hammar and Stadler \cite{dsn24_hammar_stadler,10.1007/978-3-031-74835-6_1}. However, their method assumes that failures are independent across replicas, an assumption that does not hold in our setting. Other models that have similarities with the one studied in this paper include the singleagent recovery-control model by Kreidl and Frazier \cite{1282171} and the distributed failure-detection model by Kreidl, Tsitsiklis, and Zoumpoulis \cite{5667065}. Compared to those papers, the main difference is that our model is designed for distributed recovery control, whereas the referenced papers study centralized control and decentralized detection, respectively. Finally, we note that a large body of research has studied fault detection and recovery of control systems, see e.g., the work by Teixeira et al. \cite{7993039}, Burbano et al. \cite{10132877}, Xu et al. \cite{9329102}, and Gertler \cite{9163}. Our work differs from these papers by studying recovery control of a computing infrastructure (e.g., a cloud system) rather than a physical process (e.g., a power plant).

In summary, our main contributions are as follows:
\begin{itemize}
\item We develop a novel method for distributed recovery control in replicated computing systems. Our method is based on autonomous multiagent rollout with signaling.
\item We characterize the computational complexity of our method and define theoretical conditions under which it has an approximate cost improvement property.
\item We validate our method through extensive simulations and testbed experiments. The results show that our method improves scalability compared to other rollout methods and improves performance compared to the recovery control policies currently used in practice.
\end{itemize}  

\section{System Model}\label{sec:system_model}
We consider a replicated system with $N$ \textit{nodes}, which are connected through an \textit{authenticated network}, i.e., a network where nodes can verify the sender of a message using digital signatures. We assume an upper bound on the communication delay between nodes, but allow for temporary periods of instability where the bound is violated. In other words, we consider the \textit{partially synchronous system model}.\footnote{For technical details about the partially synchronous system model, we refer to the textbook by Cachin et al. \cite{cachin} and the paper by Dwork et al. \cite{partial_synchrony}.} Within this model, any message sent between two healthy nodes is eventually delivered. However, the time to deliver a message cannot be predicted. As is well known, this is the weakest model in which distributed consensus is theoretically feasible; see Fischer, Lynch, and Paterson for a proof \cite[Thm. 1]{flp}.

Each node is segmented into two domains: a \textit{privileged domain}, which can only fail by crashing, and an \textit{application domain}, which can fail in arbitrary (i.e., Byzantine) ways.\footnote{The separation between the privileged and application domains can be realized in several ways. One option is to use a secure coprocessor to execute the privileged domain; see e.g., \cite{pbft}. Another option is implementing the privileged domain using dedicated hardware modules, such as a smartcard or an FPGA; see e.g., \cite{bft_systems_survey}. A third option, which does not require special hardware, is to separate the two domains using a secure virtualization layer that can be formally verified; see e.g., \cite{virt_replica}.} In other words, we consider the \textit{hybrid failure model}, as originally proposed by Correia et al. \cite{1180168}. The privileged domain runs a control \textit{agent} and the application domain runs a \textit{service replica}. The agent monitors the replica through failure alerts (e.g., from an anomaly detector) and decides when to recover the replica by restarting it with a new operating system; see Fig.~\ref{fig:node}. The replicas are coordinated through a consensus protocol, which guarantees correct service to clients if no more than $f < N$ replicas are faulty at the same time.

While the control agents themselves can crash, they cannot exhibit arbitrary (Byzantine) behavior like the service replicas. As a consequence, crashed agents can be detected and replaced using a timeout-based crash detector. Recovering service replicas is more difficult because their Byzantine behavior can mask or mimic correct operation, making it hard to distinguish faulty replicas from healthy ones. For this reason, our focus is on the more challenging problem of recovering the service replicas. We formalize this problem in the next section.

\begin{figure}[H]
  \centering
\scalebox{0.6}{
    \includegraphics{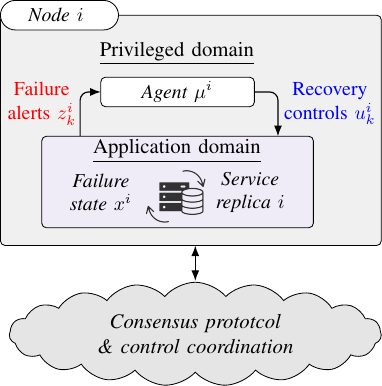}
  }
\caption{Architecture of a node in our system model. The node is segmented into a privileged domain that runs a control agent and an application domain that runs a service replica. This segmentation allows us to consider different failure assumptions: the privileged domain can only fail by crashing, while the application domain can fail in arbitrary (i.e., Byzantine) ways.}\label{fig:node}
\end{figure}
\section{Formulating Recovery Control as a POMDP}\label{sec:recovery_pomdp}
We formulate recovery control in a replicated computing system as an infinite-horizon POMDP with a discounted cost criterion. We denote the set of states by $X$, the set of controls by $U$, and the set of observations by $Z$, all being finite. Transitions from state $x$ to state $y$ under control $u$ occur at discrete times $k$ according to transition probabilities $p_{xy}(u)$. Each transition is associated with a cost $g(x,u,y)$ and an observation $z$, which is generated with probability $p(z \mid y, u)$.

In the context of our system model, the state at time $k$ is a vector $x_k=(x^1_k,\hdots,x^N_k)$, where $x^i_k=1$ if replica $i$ is faulty and $x^i_k=0$ otherwise. The initial state is $x_0=(0,\hdots,0)$. Similarly, the control is a vector $u_k=(u^1_k,\hdots,u^N_k)$, where $u^i_k=1$ means to recover replica $i$ and $u^i_k=0$ means to wait. If replica $i$ is faulty ($x_k^i=1$), then it remains so until recovery ($u_k^i=1$), at which point the state $x_{k+1}^i$ is set to $0$. Otherwise, we define the probability that the $i$th replica fails as
\begin{align}
\min\left\{p_{\mathrm{F}}\left(1+ \sum_{j=1}^NA_{ji}x^{j}_k\right), 1\right\},\label{eq:failure_prob}
\end{align}
where $p_{\mathrm{F}} \in (0,1)$ is a configurable parameter that models the propensity for failures and $A$ is an $N\times N$ binary matrix that encodes failure dependencies between replicas. Specifically, $A_{ji}=1$ means that the failure of replica $j$ increases the probability that replica $i$ will fail. Conversely, $A_{ji}=0$ means that the failure probability of replica $i$ is independent of the state of replica $j$. In our model, we set the diagonal entries of $A$ to one, i.e., $A_{ii}=1$. Conceptually, $A$ can be viewed as the adjacency matrix of a graph where an edge from node $j$ to node $i$ means that $i$ is failure-dependent on $j$; see Fig.~\ref{fig:failure_prob}.

\begin{figure}
  \centering
\scalebox{0.7}{
    \includegraphics{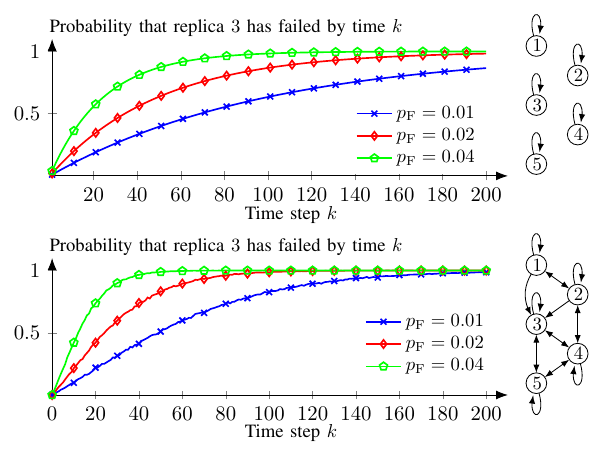}
  }
\caption{Probability that replica $3$ has failed by time $k$ in a system with $N=5$ replicas, assuming no recoveries occur. The two plots relate to different configurations of the failure-dependency matrix $A$, as illustrated by the graphs to the right of each plot. An edge from node $j$ to node $i$ in the graph means that $A_{ji}=1$. Curves relate to different values of the parameter $p_{\mathrm{F}}$; cf.~Eq.~\eqref{eq:failure_prob}.}\label{fig:failure_prob}
\end{figure}

Due to limited monitoring capabilities, the state $x_k$ is not observable. Rather, the control agents observe the vector $z_k = (z^1_k, \hdots, z^N_k)$, where $z^i_k \in \{0,1,\hdots,w\}$ models the number of failure alerts related to replica $i$ at time $k$. The distribution of these alerts is denoted by $p(z \mid y, u)$ and depends on the system state and the applied control. We estimate this distribution based on empirical measurements of failure alerts on our testbed; see Section~\ref{sec:experiment_setup} for details.

When selecting controls, the objective is to prevent unmitigated failures, minimize service disruption, and avoid unnecessary recovery controls. To capture these tradeoffs, we model the stage cost of the POMDP as
\begin{align}\label{eq:stage_cost}
g(x, u, y) = \overbrace{\lambda\phi(x,u)}^{\text{disruption cost}} + \sum_{i=1}^{N}\Big(\overbrace{\eta x^i(1-u^i)}^{\text{failure cost}} + \overbrace{u^i(1-x^i)}^{\text{recovery cost}}\Big),
\end{align}
where $0 \leq \eta,\lambda < \infty$ are weighting factors and $\phi$ is given by
\begin{align}\label{eq:downtime_cost}
  \phi(x,u) =
  \begin{dcases}
    1 & \text{ if }\left(\sum_{i=1}^{N}(x^i + (1-x^i)u^i) \right) > f,\\
    0 & \text{otherwise.}\\    
  \end{dcases}  
\end{align}
This stage cost expresses that costs are incurred for unmitigated failures ($x^i=1$ and $u^i=0$), unnecessary recovery controls ($u^i=1$ and $x^i=0$), and service disruption, which occurs when the number of replicas that are either faulty or being recovered exceeds the tolerance threshold $f$; cf.~Eq.~\eqref{eq:downtime_cost}.

\subsection*{Belief Space Formulation}
While the preceding formulation involves imperfect state information, it can be formulated as an equivalent problem with perfect state information; see e.g., \cite{ASTROM1965174}. In this formulation, the system is described by the belief state $b_k$, which is a vector with $2^N$ components,\footnote{The belief is defined over the state space $X=\{0,1\}^N$ to capture dependencies among the failure states of different replicas. When these states are independent, a compact belief representation of dimension $N$ is possible.} where each component $b_k(x)$ is the conditional probability that the state is $x$ at time $k$, given the history of controls and observations. We use $b^i_k$ to denote the belief that the state of replica $i$ is $1$, i.e.,
\begin{align}
b_k^i = \sum_{x\in X^i}b_k(x),\label{eq:marginal_b}
\end{align}
where $X^i\subset X$ is the subset of states where $x^i=1$.

The belief state belongs to the belief space $B$ and is recursively updated through a belief estimator $F$ as
\begin{align}
b_{k}&= F(b_{k-1}, u_{k-1}, z_k). \label{eq:belief_estimator}
\end{align}
The principled way to define $F$ is via the recursion
\begin{align}
b_{k}(y)&= \frac{p(z_{k} \mid y, u_{k-1})\sum_{x\in X}b_{k-1}(x)p_{xy}(u_{k-1})}{\sum_{x^{\prime} \in X}\sum_{y^{\prime} \in X}p(z_{k} \mid y^{\prime}, u_{k-1})b_{k-1}(x^{\prime})p_{x^{\prime}y^{\prime}}(u_{k-1})}.\label{eq:bayes_belief}
\end{align}

We adopt the belief-space formulation and consider recovery policies $\mu$ that map the belief space $B$ to the control space $U$. The cost function of such a policy is defined as
\begin{align}
J_{\mu}(b_0)=\lim_{H\rightarrow \infty}E\left\{\sum_{k=0}^{H-1}\alpha^k\hat{g}\big(b_k, \mu(b_k)\big)\right\},\label{eq:pomdp_minimization}
\end{align}
where $E\{\cdot\}$ denotes the expected value, $\alpha \in (0,1)$ is a discount factor, and the stage cost $\hat{g}(b,u)$ is defined as
\begin{align}
\hat{g}(b,u)=&\sum_{x\in X}b(x)\sum_{y \in X}p_{xy}(u)g(x,u,y)\label{eq:hat_g}.
\end{align}
The optimal cost function $J^{*}$, derived by optimizing over all possible policies $\mu$, uniquely satisfies the Bellman equation
\begin{align}
J^{*}(b) &= \min_{u \in U}\left[\hat{g}(b, u) + \alpha \sum_{z \in Z}\hat{p}(z \mid b, u)J^{*}(F(b,u,z))\right],\label{eq:optimal_cost}
\end{align}
where the probability $\hat{p}(z \mid b, u)$ is defined as
\begin{align}
\hat{p}(z \mid b,u)=&\sum_{x \in X}b(x)\sum_{y \in X}p_{xy}(u)p(z \mid y,u).\label{eq:hat_p}
\end{align}
We say that a policy $\mu^{*}$ is optimal if $J_{\mu^*}=J^{*}$. Although such a policy exists (see e.g., \cite[Thm. 7.6.1]{krishnamurthy_2016} or \cite[§ 5.6]{bertsekas2012dynamic}), there are no efficient algorithms to obtain it. Consequently, approximations are required in practice. We present two such approximation methods in the following sections.

\section{Centralized Control: Singleagent Rollout}
While the problem in the preceding section can in principle be solved using dynamic programming, the amount of required computation grows exponentially with the number of service replicas; see Fig. \ref{fig:spaces}. For this reason, we use approximation techniques. In particular, we use the \textit{rollout} methodology as originally formulated by Bertsekas \cite{bertsekas2021rollout}.

\begin{figure}[H]
  \centering
\scalebox{0.7}{
    \includegraphics{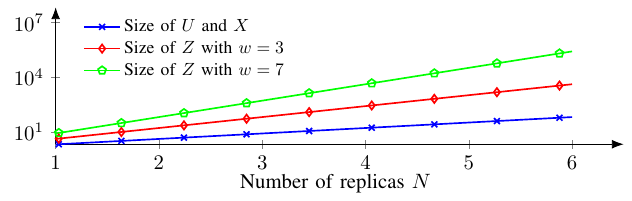}
  }
\caption{Size of the control, state, and observation spaces as a function of the number of replicas $N$. The value $w+1$ is the number of possible failure alerts that can be observed at each replica.}\label{fig:spaces}
\end{figure}

The starting point of our rollout method is a \textit{base policy} $\mu$, which can be chosen freely. It can, for example, be designed based on off-line optimization (see e.g., \cite{tifs_25_HLALB}) or heuristics (see e.g., \cite{10.1109/TRO.2023.3347128}). In the context of recovery control, a natural choice may be a policy that recovers replicas periodically. Such a recovery policy was used by Castro and Liskov \cite{pbft}.

Regardless of the type of base policy, we can start from a belief $b_k$ and simulate POMDP trajectories of the form 
\begin{align*}
b_k, u_k, b_{k+1}, u_{k+1}, \hdots, b_{k+m-1},u_{k+m-1}, b_{k+m},
\end{align*}
where $m \geq 1$ is the \textit{rollout horizon} and each control is given by $u_l=\mu(b_l)$. Such simulations allow us to estimate the expected cost-to-go of the base policy from any belief $b_k$ as
\begin{align}
\Tilde{J}_{\mu}(b_{k})\!\!&=\!\!\frac{1}{L}\sum_{s=1}^{L}\sum_{l=k}^{k+m-1}\alpha^{l-k}\hat{g}\big(b_{l,s},\mu(b_{l,s})\big) + \alpha^{m}\Tilde{J}(b_{k+m,s}),\label{eq:rollout_approximation}                                  
\end{align}
where $L \geq 1$ is the number of simulations. Here $(b_{k,s},\dots,b_{k+m,s})$ is the belief trajectory of the $s$th simulation and $\Tilde{J}$ is a (bounded) \textit{cost function approximation}, which can be designed based on off-line approximation techniques. For example, $\Tilde{J}$ can be obtained through aggregation methods (see e.g., \cite{tifs_25_HLALB,li2025feature,Hammar_Alpcan_2026}) or deep learning (see e.g., \cite{10575707}).

Given the base policy $\mu$ and its cost-to-go estimate $\Tilde{J}_{\mu}$ [cf.~Eq.~\eqref{eq:rollout_approximation}], we compute a \textit{rollout policy} $\tilde{\mu}$ through lookahead optimization as 
\begin{align}
  &\tilde{\mu}(b_k) \in \arg\min_{u_k \in U}\Bigg[\hat{g}(b_k,u_k) + \min_{\mu_{k+1},\hdots,\mu_{k+\ell-1}}\mathop{E}_{z_{k+1},\hdots,z_{k+\ell}}\nonumber\\
  &\quad\quad\quad\quad\bigg\{\sum_{j=k+1}^{k+\ell-1}\alpha^{j-k}\hat{g}(b_j, \mu_j(b_j)) + \alpha^{\ell}\Tilde{J}_{\mu}(b_{k+\ell})\bigg\}\Bigg] \label{eq:rollout},
\end{align}
where $\ell \geq 1$ is the lookahead horizon. This rollout policy possesses the fundamental \textit{cost improvement} property. To state this property formally, let us denote by $e:B \rightarrow \Re$ the unit function that takes the value $1$ identically on $B$, i.e., $e(b)\equiv1$. Consider the following optimization problem
\begin{subequations}
    \label{eq:epsilon_approx_rollout}
    \begin{align}
        \inf_{c\in \Re}& \quad \Vert \tilde J_\mu+ce-J_\mu\Vert \label{eq:infimum}\\
	\mathrm{s.\,t.} &\quad \min_{u \in U}\left[\hat{g}(b, u) + \alpha \sum_{z \in Z}\hat{p}(z \mid b, u)\tilde{J}_\mu(F(b,u,z))\right]\nonumber\\
    &\quad\leq \tilde{J}_\mu(b)+(1-\alpha)c,\quad \hbox{for all }b \in B,\label{eq:lyapunov_rollout}
    \end{align}
\end{subequations}
where $\norm{\cdot}$ is the maximum norm. Let $\epsilon$ denote the infimum of this optimization problem. Given this notation, we have the following result, which we prove in Appendix \ref{app:proof_prop_1}.
\begin{proposition}\label{prop:improvement} The rollout policy $\tilde{\mu}$ approximately improves the base policy $\mu$ with error at most $\alpha^\ell \epsilon$, i.e., 
$$J_{\tilde{\mu}}(b)\leq J_{\mu}(b)+\alpha^{\ell} \epsilon,\quad \hbox{for all }b \in B,$$
where $0\leq \epsilon<\infty$ is defined by the infimum \eqref{eq:epsilon_approx_rollout}.
\end{proposition}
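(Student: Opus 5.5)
The plan is to rewrite everything in Bellman-operator notation and then run the standard monotonicity argument for rollout with a cost function approximation. I treat $\tilde J_\mu$ as a fixed bounded function on $B$, and $J_\mu$, $\tilde J_\mu$ as elements of the space of bounded functions on $B$ with the maximum norm. For such $J$ and any policy $\mu'$, define
\begin{align*}
(T_{\mu'}J)(b)&=\hat g(b,\mu'(b))+\alpha\sum_{z\in Z}\hat p(z\mid b,\mu'(b))J(F(b,\mu'(b),z)),\\
(TJ)(b)&=\min_{u\in U}\Big[\hat g(b,u)+\alpha\sum_{z\in Z}\hat p(z\mid b,u)J(F(b,u,z))\Big].
\end{align*}
I will use the following standard facts:
\begin{itemize}
\item $T$ and $T_{\mu'}$ are monotone.
\item They satisfy the constant-shift property $T(J+ce)=TJ+\alpha ce$, and likewise for $T_{\mu'}$.
\item $T_{\mu'}$ is an $\alpha$-contraction with fixed point $J_{\mu'}$.
\item $TJ_\mu\le T_\mu J_\mu=J_\mu$.
\end{itemize}

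\textbf{Step 1: identify the rollout policy.} The first step is to observe that the inner minimization over $\mu_{k+1},\hdots,\mu_{k+\ell-1}$ in \eqref{eq:rollout} is an $(\ell-1)$-stage dynamic program with terminal cost $\tilde J_\mu$. Its value is therefore $T^{\ell-1}\tilde J_\mu$ evaluated at $b_{k+1}$. Consequently \eqref{eq:rollout} says exactly
$$T_{\tilde\mu}T^{\ell-1}\tilde J_\mu=T^{\ell}\tilde J_\mu.$$
This is the step I expect to need the most care. One must argue that, in the finite-observation belief model, minimizing the expectation over policies equals the nested stagewise minimization, which follows from the finite-horizon DP algorithm. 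For $\ell=1$ the statement is immediate.

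\textbf{Step 2: finiteness and a shifted function.} Next I check that $\epsilon<\infty$. Because $\tilde J_\mu$ and $\hat g$ are bounded, the constraint \eqref{eq:lyapunov_rollout} holds for all sufficiently large $c$, since its right-hand side grows like $(1-\alpha)c$. So the feasible set is nonempty and every feasible $c$ gives a finite objective value.

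Fix $\delta>0$ and choose a feasible $c$ with $\|\tilde J_\mu+ce-J_\mu\|\le\epsilon+\delta$. Set $J'=\tilde J_\mu+ce$. The constraint reads $T\tilde J_\mu\le \tilde J_\mu+(1-\alpha)ce$. Adding $\alpha ce$ and using the constant-shift property gives
$$TJ'\le J'.$$

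\textbf{Step 3: the rollout policy improves on $T^\ell J'$.} By the shift property, Step 1 also holds for $J'$, i.e. $T_{\tilde\mu}T^{\ell-1}J'=T^\ell J'$. Set $V=T^{\ell-1}J'$. Applying $T^{\ell-1}$ to $TJ'\le J'$ and using monotonicity gives
$$T_{\tilde\mu}V=T^\ell J'\le T^{\ell-1}J'=V.$$
Iterating with monotonicity of $T_{\tilde\mu}$ yields $T_{\tilde\mu}^{k+1}V\le T_{\tilde\mu}V$ for all $k$. Letting $k\to\infty$ and using the contraction property,
$$J_{\tilde\mu}\le T_{\tilde\mu}V=T^\ell J'.$$

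\textbf{Step 4: compare with $J_\mu$.} From $J'\le J_\mu+(\epsilon+\delta)e$, monotonicity and the shift property give
$$T^\ell J'\le T^\ell J_\mu+\alpha^\ell(\epsilon+\delta)e.$$
Since $TJ_\mu\le J_\mu$, repeated application gives $T^\ell J_\mu\le J_\mu$. Hence $J_{\tilde\mu}\le J_\mu+\alpha^\ell(\epsilon+\delta)e$. Letting $\delta\downarrow0$ gives the claim of Proposition~\ref{prop:improvement}.
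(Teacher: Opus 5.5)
Your proof is correct and follows essentially the same route as the paper. Both arguments establish $TJ'\le J'$ for the shifted function $J'=\tilde J_\mu+ce$ and use $T_{\tilde\mu}T^{\ell-1}\tilde J_\mu=T^\ell\tilde J_\mu$ with monotonicity to reach $J_{\tilde\mu}\le T^\ell J'$; they then finish with the constant-shift property and $T^\ell J_\mu\le J_\mu$. The differences are cosmetic: you handle a possibly non-attained infimum with a $\delta$-approximation and use an explicit contraction argument, whereas the paper splits into attained and non-attained cases and cites the bound $T_{\mu}J\le J\Rightarrow J_{\mu}\le J$.
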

Proposition \ref{prop:improvement} provides a performance guarantee for the rollout policy. It implies that the rollout policy will generally perform at least as well, and typically better than the base policy. Moreover, it can be shown that the performance gap between the rollout policy and the optimal policy decreases exponentially with the lookahead horizon $\ell$; see Bertsekas for a proof \cite[Prop. 5.1.1]{bertsekas2019reinforcement}. Another appealing property of the rollout method is that the computational cost can be scaled by tuning the number of lookahead steps ($\ell$), the number of simulations ($L$), and the rollout horizon ($m$), as stated below.
\begin{proposition}\label{prop:complexity_1}
If the belief estimator is implemented via Eq.~\eqref{eq:bayes_belief} and the evaluations of $p_{xy}(u)$, $p(z \mid y, u)$, $g(x,u,y)$, and $\tilde{J}(b)$ are constant-time operations, then the computation required by Eq.~\eqref{eq:rollout} is of order
\begin{align*}
O\left((2^N|Z|)^{\ell}Lm|X|^2\right),
\end{align*}
where $|\cdot|$ denotes the cardinality of a set.
\end{proposition}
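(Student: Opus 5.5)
The approach is to count operations in the lookahead tree of Eq.~\eqref{eq:rollout}, working from the leaves upward. At the leaves, we bound the cost of one evaluation of $\tilde{J}_\mu$ via Eq.~\eqref{eq:rollout_approximation}. At the root, we multiply that cost by the number of leaves.

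\emph{Step 1 (primitive costs).} One belief update via Eq.~\eqref{eq:bayes_belief} requires, for each $y\in X$, a sum over $x\in X$ of $b(x)p_{xy}(u)$. The normalizing denominator is obtained from the same quantities. Under the constant-time assumptions this costs $O(|X|^2)$. Similarly, $\hat{g}(b,u)$ in Eq.~\eqref{eq:hat_g} and $\hat{p}(z\mid b,u)$ in Eq.~\eqref{eq:hat_p} are double sums over $X\times X$, so each costs $O(|X|^2)$. Evaluating $\mu(b)$ is also taken to be at most $O(|X|^2)$: for the periodic or threshold base policies considered here it is cheap, or it can be absorbed by this assumption. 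Sampling an observation $z$ from $\hat{p}(\cdot\mid b,u)$ is treated as a constant-time draw once the distribution is available.

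\emph{Step 2 (one base-policy evaluation).} Each of the $L$ simulated trajectories in Eq.~\eqref{eq:rollout_approximation} has $m$ steps. Each step requires one policy evaluation, one stage-cost evaluation, one sampled observation, and one belief update, all at cost $O(|X|^2)$. The terminal $\tilde{J}$ evaluation is $O(1)$. Hence one evaluation of $\tilde{J}_\mu(b_{k+\ell})$ costs $O(Lm|X|^2)$.

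\emph{Step 3 (the lookahead tree).} This is the step needing care. Eq.~\eqref{eq:rollout} is solved by exhaustive dynamic programming over an $\ell$-stage tree. At each of the $\ell$ stages, every control $u\in U$ with $|U|=2^N$ is considered, and every observation $z\in Z$ is branched on. The minimization over $\mu_{k+1},\dots,\mu_{k+\ell-1}$ is carried out pointwise at each tree node, so it adds no combinatorial factor beyond enumerating controls there. The tree therefore has $O((2^N|Z|)^{\ell})$ leaves, and its total number of nodes is of the same order because it is a geometric sum dominated by the last level. Each internal node costs $O(|X|^2)$ for the belief update and the evaluations of $\hat{g}$ and $\hat{p}$. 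Each leaf costs $O(Lm|X|^2)$ by Step 2. The leaves dominate, giving $O((2^N|Z|)^{\ell}Lm|X|^2)$.

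The main obstacle is the bookkeeping in Step 3. One must make sure the inner minimization over future policies is interpreted as per-node minimization rather than as an enumeration over policy functions, which would be exponentially larger. One must also check that the internal-node work, of order $(2^N|Z|)^{\ell-1}|X|^2$, is dominated by the leaf work. I would state both points explicitly and then conclude.
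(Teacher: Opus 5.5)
Your proposal is correct and follows essentially the same route as the paper: bound one evaluation of Eq.~\eqref{eq:rollout_approximation} by $O(Lm|X|^2)$, count the $O((2^N|Z|)^{\ell})$ edges of the lookahead tree with branching factor $2^N|Z|$ (one $O(|X|^2)$ belief update each), and note that the leaf evaluations dominate the tree construction since $Lm\geq 1$. You also make two hypotheses explicit: that evaluating $\mu(b)$ costs at most $O(|X|^2)$, and that a simulated observation can be drawn in constant time (which must mean sampling $x$, then $y$, then $z$ directly, not tabulating $\hat{p}(\cdot\mid b,u)$ over all of $Z$); the paper's proof uses both only implicitly, so stating them is a clarification rather than a departure.
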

\begin{proof}
Since the evaluations of $p_{xy}(u)$, $p(z \mid y, u)$, $g(x,u,y)$, and $\tilde{J}(b)$ are constant-time operations by assumption, it follows that the computations required by Eq.~\eqref{eq:bayes_belief} and Eq.~\eqref{eq:hat_g} are both of order $O(|X|^2)$. Furthermore, because Eqs.~\eqref{eq:rollout_approximation}--\eqref{eq:rollout} invoke Eq.~\eqref{eq:hat_g} at most once per invocation of Eq.~\eqref{eq:bayes_belief}, Eq.~\eqref{eq:hat_g} does not affect the order of the computation. As a result, the computation required by Eq.~\eqref{eq:rollout_approximation} is of order $O(Lm|X|^2)$. Now consider Eq.~\eqref{eq:rollout}. The lookahead can be represented as a tree with branching factor $|U||Z|=2^N|Z|$. For each edge of this tree, Eq.~\eqref{eq:rollout} invokes the belief estimator. The number of edges in the lookahead tree is $\sum_{j=1}^{\ell}(2^N|Z|)^j = O((2^N|Z|)^{\ell})$. Thus, the complexity of constructing the tree is of order $O((2^N|Z|)^{\ell}|X|^2)$. Moreover, for each of the beliefs reached after $\ell$ lookahead steps, Eq.~\eqref{eq:rollout} invokes Eq.~\eqref{eq:rollout_approximation}, which requires $O\left((2^N|Z|)^{\ell}Lm|X|^2\right)$ computations in total. Hence, the complexity of computing Eq.~\eqref{eq:rollout} is of order
\begin{align*}
\underbrace{O\left((2^N|Z|)^{\ell}Lm|X|^2\right)}_{\text{Cost-to-go computation}} + \underbrace{O\left((2^N|Z|)^{\ell}|X|^2\right)}_{\text{Lookahead tree construction}},
\end{align*}
which is dominated by the first term since $Lm \geq 1$.
\end{proof}
\begin{remark}\label{remark_1}
The belief estimator $F$ can be implemented using approximation schemes to reduce the computational cost. For example, if $F$ is implemented by a particle filter with $M$ particles, then the computation per belief update is of order $O(M)$ rather than $O(|X|^2)$, which reduces the complexity of computing Eq.~\eqref{eq:rollout} to $O\left((2^N|Z|)^{\ell}LmM\right)$.
\end{remark}
Despite the scalable computational complexity, practical computation of the rollout policy [cf.~Eq.~\eqref{eq:rollout}] involves two major challenges. First, the size of the control space $U$ is $2^N$, which means that the complexity of the minimization \eqref{eq:rollout} increases exponentially with the number of replicas $N$. Second, the minimization \eqref{eq:rollout} requires complete coordination among the service replicas (e.g., the selection of control $u^1$ depends on $u^2,\hdots,u^N$), which may be impractical. In the following section, we present a variant of rollout that reduces the computational complexity and the need for coordination. We refer to this rollout method as \textit{multiagent rollout}.

\begin{figure*}[!t]
\begin{subequations}\label{eq:multiagent_rollout}
\begin{align}
&\tilde{\mu}^1(b_k)\in \arg\min_{u^1_k \in \{0,1\}}\Bigg[\hat{g}\Big(b_k,\big(u^1_k,\mu^{2}(b_k),\mu^{3}(b_k),\dots,\mu^{N}(b_k)\big)\Big) + \alpha\mathop{E}_{z_{k+1}}\bigg\{\tilde{J}_{\mu}(b_{k+1})\bigg\}\Bigg],\\
&\tilde{\mu}^2(b_k)\in \arg\min_{u^2_k \in \{0,1\}}\Bigg[\hat{g}\Big(b_k,\big(\tilde{\mu}^1(b_k),u^2_k,\mu^{3}(b_k),\mu^{4}(b_k),\dots,\mu^{N}(b_k)\big)\Big) + \alpha\mathop{E}_{z_{k+1}}\bigg\{\tilde{J}_{\mu}(b_{k+1})\bigg\}\Bigg],\\
  \vdots\nonumber\\
&\tilde{\mu}^N(b_k)\in \arg\min_{u^N_k \in \{0,1\}}\Bigg[\hat{g}\Big(b_k,\big(\tilde{\mu}^1(b_k),\tilde{\mu}^2(b_k),\dots,\tilde{\mu}^{N-1}(b_k),u^N_k\big)\Big) + \alpha\mathop{E}_{z_{k+1}}\bigg\{\tilde{J}_{\mu}(b_{k+1})\bigg\}\Bigg].    
\end{align}
\end{subequations}
\end{figure*}
\begin{figure*}[!t]
\begin{equation}\label{eq:auto_multiagent_rollout}
\begin{aligned}
&\tilde{\mu}^i(b_k)\in \arg\min_{u^i_k \in \{0,1\}}\Bigg[\hat{g}\Big(b_k,\big(\widehat{\mu}^1(b_k),\dots,\widehat{\mu}^{i-1}(b_k),u^i_k, \mu^{i+1}(b_k),\dots,\mu^{N}(b_k)\big)\Big) + \alpha\mathop{E}_{z_{k+1}}\bigg\{\tilde{J}_{\mu}(b_{k+1})\bigg\}\Bigg].
\end{aligned}
\end{equation}
\end{figure*}
\section{Distributed Control: Multiagent Rollout}\label{sec:multiagent_rollout}
The key idea of multiagent rollout is to reformulate the control problem as an equivalent problem where the complexity of the control space $U$ is reduced at the expense of additional (observable) states, as originally proposed by Bertsekas and Tsitsiklis \cite[Section 6.1.4]{BertsekasTsitsiklis96}. This tradeoff is favorable since the complexity of rollout [cf.~Eq.~\eqref{eq:rollout}] is proportional to $|U|=2^N$ but is independent of the number of observable states; cf. Prop. \ref{prop:complexity_1} and Remark \ref{remark_1}. Specifically, we reformulate the original problem by introducing the intermediate states
\begin{align*}
(b_k,u^1_{k}), (b_k,u^1_k,u_k^2), \hdots, (b_k,u^1_{k},\hdots,u^{N-1}_{k}) && \text{for all }k.
\end{align*}
Given these intermediate states, the control components $u^1_k, \hdots,u^N_k$ are selected sequentially. That is, given the belief $b_k$, we start by selecting the control $u^1_k$. Then, given the belief $b_k$ and the control $u^1_k$, we select the control $u^2_k$, and so on; see Fig. \ref{fig:multiagent}. Conceptually, each control component $u_k^i$ is selected by a separate ``\textit{agent}''. It is evident that this reformulated problem is equivalent to the original, since any control that is possible in one problem is also possible in the other problem, while the cost structure of the two problems is the same.

\begin{figure}[H]
  \centering
  \scalebox{0.85}{
    \includegraphics{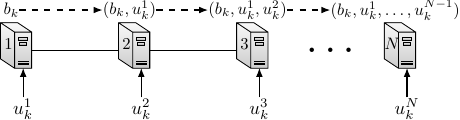}
  }
  \caption{Illustration of the reformulated (multiagent) problem. The control components $u^1_{k},\hdots,u^N_{k}$ are selected sequentially by separate agents, where the selection of $u^i_{k}$ is done in the intermediate state $(b_k, u^1_{k}, u^2_{k},\hdots,u^{i-1}_{k})$.}
  \label{fig:multiagent}
\end{figure}
\begin{remark}
The order in which the controls are selected is not restricted to $1,2,\hdots,N$ and can be optimized; see \cite{bertsekas2021rollout}.
\end{remark}

The motivation for the reformulated problem is that the computation required by the control minimization \eqref{eq:rollout} is reduced from being in the order of $O(2^N)$ to $O(N)$, where $N$ is the number of replicas. (This reduction is formalized in Prop.~\ref{prop:complexity_2} below.) Applying the rollout method [cf.~Eq.~\eqref{eq:rollout}] to the reformulated problem with $\ell=1$, we obtain Eq.~\eqref{eq:multiagent_rollout}, which is stated at the top of the next page. (We use $\mu^i(b_k)$ to denote the $i$th control component selected by the policy $\mu$.)

Despite the significant reduction in computational cost, the cost improvement property of multiagent rollout is maintained. To formalize this property, let $\tilde \mu$ denote the multiagent rollout policy and consider the following optimization problem
\begin{subequations}
    \label{eq:epsilon_approx_multi_rollout}
    \begin{align}
        \inf_{c\in \Re}& \quad \Vert \tilde J_\mu+ce-J_\mu\Vert\\
	\mathrm{s.\,t.} &\quad\left[\hat{g}\big(b, \tilde{\mu}(b)\big) + \alpha \sum_{z \in Z}\hat{p}\big(z \mid b, \tilde{\mu}(b)\big)\tilde{J}_\mu\Big(F\big(b,\tilde{\mu}(b),z\big)\Big)\right]\nonumber\\
    &\quad\leq \tilde{J}_\mu(b)+(1-\alpha)c,\quad \hbox{for all }b.\label{eq:lyapunov_multi_rollout}
    \end{align}
\end{subequations}
Denote by $\hat \epsilon$ the infimum above. Given this notation, we have the following result, which we prove in Appendix \ref{app:proof_prop_3}.  
\begin{proposition}\label{prop:multiagent_rollout_improvement}
The multiagent rollout policy $\tilde{\mu}$ computed via Eq.~\eqref{eq:multiagent_rollout} approximately improves the base policy $\mu$ with error at most $\hat \epsilon$, i.e., 
$$J_{\tilde{\mu}}(b)\leq J_{\mu}(b)+ \hat\epsilon,\quad \hbox{for all }b \in B,$$
where $0\leq \hat\epsilon<\infty$ is defined by the infimum \eqref{eq:epsilon_approx_multi_rollout}.
\end{proposition}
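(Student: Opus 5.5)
The plan is to argue via the Bellman operator of the fixed policy $\tilde{\mu}$, using monotonicity and contraction. For any bounded $J$ on $B$, let
\[
(T_{\tilde{\mu}}J)(b)=\hat{g}\big(b,\tilde{\mu}(b)\big)+\alpha\sum_{z\in Z}\hat{p}\big(z\mid b,\tilde{\mu}(b)\big)J\big(F(b,\tilde{\mu}(b),z)\big).
\]
Three standard facts will be used. First, $T_{\tilde{\mu}}$ is monotone. Second, $T_{\tilde{\mu}}(J+ce)=T_{\tilde{\mu}}J+\alpha c e$, since $\sum_z\hat{p}(z\mid b,u)=1$. Third, $T_{\tilde{\mu}}$ is an $\alpha$-contraction in the maximum norm whose unique fixed point is $J_{\tilde{\mu}}$, so $T_{\tilde{\mu}}^kJ\to J_{\tilde{\mu}}$ for any bounded $J$. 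Boundedness holds because $X$, $U$ and $Z$ are finite, $g$ is bounded, and $\tilde{J}$ is bounded, so $\tilde J_\mu$ is bounded too.

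\textbf{Step 1.} Fix any $c$ feasible for \eqref{eq:epsilon_approx_multi_rollout} and set $\bar J=\tilde J_\mu+ce$. The constraint \eqref{eq:lyapunov_multi_rollout} says $T_{\tilde{\mu}}\tilde J_\mu\le \tilde J_\mu+(1-\alpha)ce$. By the constant-shift property,
\[
T_{\tilde{\mu}}\bar J=T_{\tilde{\mu}}\tilde J_\mu+\alpha ce\le \tilde J_\mu+ce=\bar J .
\]

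\textbf{Step 2.} Applying $T_{\tilde{\mu}}$ repeatedly and using monotonicity gives $T_{\tilde{\mu}}^{k}\bar J\le \bar J$ for all $k$. Letting $k\to\infty$ yields $J_{\tilde{\mu}}\le \bar J=\tilde J_\mu+ce$.

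\textbf{Step 3.} Since $\tilde J_\mu+ce\le J_\mu+\|\tilde J_\mu+ce-J_\mu\|e$, we get $J_{\tilde{\mu}}\le J_\mu+\|\tilde J_\mu+ce-J_\mu\|e$. Taking the infimum over feasible $c$ gives the bound with $\hat\epsilon$.

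Finiteness of $\hat\epsilon$ requires the feasible set to be nonempty. Because both sides of the constraint are bounded, any $c\ge \|T_{\tilde\mu}\tilde J_\mu-\tilde J_\mu\|/(1-\alpha)$ is feasible, and then the objective is finite. Nonnegativity of $\hat\epsilon$ is immediate since it is the infimum of a norm.

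This is simpler than Proposition \ref{prop:improvement}: there the constraint involves the minimum over $u$, and the $\alpha^\ell$ factor comes from propagating through the $\ell$-step lookahead. Here the constraint is stated directly in terms of $\tilde\mu$, so no lookahead factor appears and the multiagent structure in \eqref{eq:multiagent_rollout} plays no role in the argument.

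The main technical point is Step 2's limit, namely that $T_{\tilde\mu}^k\bar J\to J_{\tilde\mu}$ for a bounded starting function. This follows from the standard discounted-POMDP contraction theory cited earlier (e.g.\ \cite[\S 5.6]{bertsekas2012dynamic}). I would invoke that reference rather than reprove the contraction property on the belief space.
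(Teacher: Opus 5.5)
Your proof is correct and follows essentially the same route as the paper. Both arguments use the constant-shift property to rewrite the constraint as $T_{\tilde\mu}(\tilde J_\mu+ce)\le \tilde J_\mu+ce$, deduce $J_{\tilde\mu}\le \tilde J_\mu+ce$, and bound the right-hand side by the maximum norm. Two minor differences: you rederive Prop.~\ref{prop:bert_upper}(b) through the contraction and fixed-point iteration where the paper simply cites it, and by taking the infimum over all feasible $c$ directly you also cover the non-attained case, which the paper handles with a separate sequence argument.
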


Compared with the performance improvement of standard rollout with one-step lookahead, the error term $\alpha\epsilon$ in Prop.~\ref{prop:improvement} is replaced by $\hat \epsilon$. Since the feasible set defined by Eq.~\eqref{eq:lyapunov_rollout} is larger than that defined by Eq.~\eqref{eq:lyapunov_multi_rollout}, we have $\epsilon \leq \hat \epsilon$. Therefore, the improvement guarantee of multiagent rollout, as stated in Prop.~\ref{prop:multiagent_rollout_improvement}, is weaker than that of standard rollout with one-step lookahead given in Prop.~\ref{prop:improvement}. However, our experimental evaluation in \S\ref{sec:exp_eval} shows that both standard rollout and multiagent rollout attain comparable performance in practice.

In addition to preserving the approximate cost improvement property, multiagent rollout significantly reduces the computational cost. In particular, the factor $|U|=2^N$ in Prop.~\ref{prop:complexity_1} is reduced to $2N$, as stated below. The proof follows the same arguments as the proof of Prop.~\ref{prop:complexity_1} and is omitted for brevity.

\begin{proposition}\label{prop:complexity_2}
If the belief estimator is implemented via Eq.~\eqref{eq:bayes_belief} and the evaluations of $p_{xy}(u)$, $p(z \mid y, u)$, $g(x,u,y)$, and $\tilde{J}(b)$ are constant-time operations, then the computation required by Eq.~\eqref{eq:multiagent_rollout} is of order
\begin{align*}
O\left(N|Z|Lm|X|^2\right).
\end{align*}
\end{proposition}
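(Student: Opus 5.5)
The argument mirrors the proof of Prop.~\ref{prop:complexity_1}, with the lookahead tree replaced by $N$ sequential one-step minimizations over binary sets. First, as in that proof, the constant-time assumptions give two per-call costs. One evaluation of the belief estimator via Eq.~\eqref{eq:bayes_belief} costs $O(|X|^2)$. One evaluation of $\hat g$ via Eq.~\eqref{eq:hat_g} also costs $O(|X|^2)$. Since Eq.~\eqref{eq:rollout_approximation} makes at most one call to Eq.~\eqref{eq:hat_g} per belief update, one evaluation of $\tilde J_\mu$ from a given belief costs $O(Lm|X|^2)$.

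Next, I count the work done by a single agent $i$ in Eq.~\eqref{eq:multiagent_rollout}. The agent compares two candidate controls, $u^i_k\in\{0,1\}$. The other components are either fixed by earlier agents ($\tilde\mu^1(b_k),\dots,\tilde\mu^{i-1}(b_k)$, already computed) or given by the base policy. Evaluating $\mu^{j}(b_k)$ for $j>i$ is a policy evaluation at the current belief. I treat this as no more expensive than the work below, since the base policy is evaluated repeatedly inside Eq.~\eqref{eq:rollout_approximation} anyway.

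For each candidate control $u$, agent $i$ performs three steps:
\begin{itemize}
\item it computes $\hat g(b_k,u)$ at cost $O(|X|^2)$;
\item it forms the expectation over $z_{k+1}$, which requires, for each of the $|Z|$ observations, $\hat p(z\mid b_k,u)$ via Eq.~\eqref{eq:hat_p} and the updated belief $F(b_k,u,z)$, at cost $O(|X|^2)$ per observation;
\item it evaluates $\tilde J_\mu(F(b_k,u,z))$ for each of these $|Z|$ beliefs, at cost $O(Lm|X|^2)$ each.
\end{itemize}
The per-agent cost is therefore $2\big(O(|X|^2)+|Z|\,O(|X|^2)+|Z|\,O(Lm|X|^2)\big)=O(|Z|Lm|X|^2)$, because $Lm\ge 1$.

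Summing over the $N$ agents, which run sequentially, gives $O(N|Z|Lm|X|^2)$. This is the bound of Prop.~\ref{prop:complexity_1} with $\ell=1$, except that the branching factor $|U|=2^N$ is replaced by $2N$, the total number of candidate controls examined. The key point is that no agent minimizes over the joint control space: the $i$th minimization ranges only over $\{0,1\}$.

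The step I expect to need care is the treatment of the base policy calls $\mu^j(b_k)$ and the $\tilde\mu^{j}(b_k)$ values fed forward between agents. I will argue as follows. The controls $\tilde\mu^{j}(b_k)$ are stored once computed, so each agent simply reuses them. The base policy is evaluated once at $b_k$ and its $N$ components are shared among all agents. Under the same constant-time convention implicitly used in Prop.~\ref{prop:complexity_1}, namely that base-policy evaluations do not dominate belief updates, this adds at most a lower-order term, and the stated order follows.
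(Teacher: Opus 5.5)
Your proposal is correct and matches the argument the paper intends. The paper omits this proof, saying it follows the counting in the proof of Prop.~\ref{prop:complexity_1}: with $\ell=1$, the $2^N$ joint controls are replaced by the $2N$ binary candidates examined sequentially, giving $2N|Z|$ belief updates, each followed by an $O(Lm|X|^2)$ evaluation of $\tilde J_\mu$, hence $O(N|Z|Lm|X|^2)$. Treating the base-policy evaluations and the forwarded controls $\tilde\mu^j(b_k)$ as lower-order terms is consistent with the convention implicitly used in that proof.
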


While multiagent rollout reduces the amount of required computation, it requires significant coordination among the agents, which means that the computations cannot be parallelized. Specifically, multiagent rollout requires that the optimization of the control component $u^i$ takes into account the controls $\tilde\mu^1(b_k),\tilde\mu^2(b_k),\hdots,\tilde\mu^{i-1}(b_k)$; cf. Eq.~\eqref{eq:multiagent_rollout}. In practice, it may be more practical to have agent $i$ select its control before knowing the rollout controls of some of the agents $1,\hdots,i-1$ and instead use the controls $\widehat{\mu}^1(b_k), \hdots, \widehat{\mu}^{i-1}(b_k)$ in their place, where $\widehat{\mu}$ is a \textit{signaling policy} that embodies agent coordination. In other words, the agents compute their controls in parallel and use the signaling policy to predict the controls of the preceding agents. We call this method \textit{autonomous multiagent rollout} and define it in Eq.~\eqref{eq:auto_multiagent_rollout} at the top of the page.

The benefit of this method is that it provides additional speedup through parallelization. However, it does not possess the cost improvement property. Hence, \textit{autonomous multiagent rollout} may work well for some problems and not work well for other problems. The major determinants of the performance of this method are a) to what extent agent coordination is needed to select effective controls; and b) the accuracy of the signaling policy $\widehat{\mu}$. In practice, the signaling policy can be designed off-line based on heuristics or approximation architectures, such as neural networks. We evaluate different designs of the signaling policy in the next section.
 
\section{Experimental Evaluation}\label{sec:exp_eval}
In this section, we present an experimental evaluation of the rollout methods described above. Our study focuses on the tradeoff between computational time and cost.
\begin{figure*}
  \centering
\scalebox{0.62}{
    \includegraphics{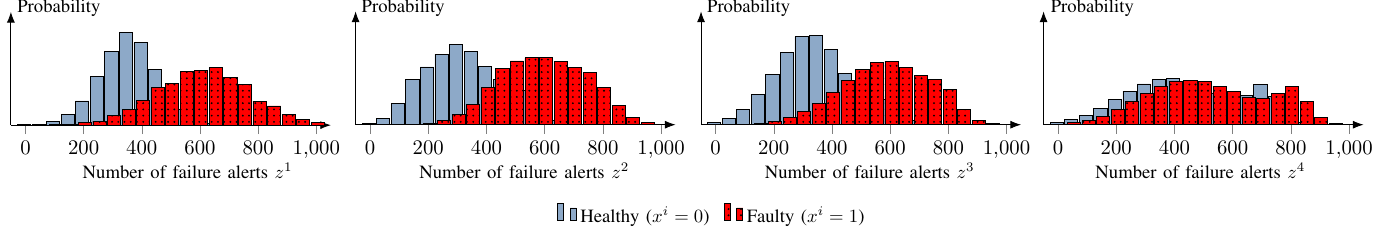}
  }
\caption{Empirical observation distributions based on our testbed measurements. The measurements are available at \texttt{https://github.com/Kim-Hammar/csle}. We plot the distributions as $p(z^i \mid x^i)$ rather than $p(z \mid x, u)$ since the distributions across replicas are independent given the local state $x^i$ and do not depend on the control $u$. The figure shows the observation distribution of four example replicas, i.e., $i \in \{1,2,3,4\}$. For details about our testbed setup, see Appendix \ref{app:testbed}.}\label{fig:obs_dist}
\end{figure*}

\subsection{Experiment Setup}\label{sec:experiment_setup}
All computations are performed on an M4 pro processor with $12$ CPU cores. When computing the expected cost of a policy, we take the average based on $100$ simulated POMDP executions of length $100$. We instantiate the POMDP described in Section~\ref{sec:recovery_pomdp} as follows. We set the failure probability to be $p_{\mathrm{F}}=0.05$ and the dependency matrix $A$ to be the adjacency matrix of a random Erdős–Rényi graph with parameter $p=0.5$. Similarly, we define the tolerance threshold to be $f = \lfloor\frac{N-1}{2}\rfloor$; cf.~Eq.~\eqref{eq:downtime_cost}. Moreover, we define the weighting factors in the stage cost to be $\eta=0.2$ and $\lambda = 1.5$; cf.~Eq.~\eqref{eq:stage_cost}. Lastly, we define the discount factor to be $\alpha=0.95$.

We consider two variants of the belief estimator $F$ in Eq.~\eqref{eq:belief_estimator}: the filter in Eq.~\eqref{eq:bayes_belief} and a particle filter defined as 
\begin{align}
\hat{b}_k(x) = \frac{1}{M}\sum_{s=1}^M\delta_{x\hat{x}_k^{s}}, && \text{for all }x \in X,\label{eq:estimate_belief}
\end{align}
where $\delta_{xy}=1$ if $x=y$ and $\delta_{xy}=0$ if $x\neq y$. The states (particles) $\hat{x}_k^{1},\hdots,\hat{x}_k^{M}$ are sampled with probability proportional to the numerator in Eq.~\eqref{eq:bayes_belief}. In our experiments, we use $M=50$ particles. Moreover, when using particle filtering, we approximate the expected values in Eq.~\eqref{eq:rollout} and Eq.~\eqref{eq:multiagent_rollout} by taking the average values of $100$ simulations.

We instantiate the rollout method [cf.~Eq.~\eqref{eq:rollout}] with $L=10$ simulations and define the base policy $\mu$ as
\begin{align}\label{eq:base_policy}
  \mu^i(b_k) =
  \begin{dcases}
    1 & b^i_k > 0.9,\\
    0 & \text{otherwise,}
  \end{dcases}
&& i \in \{1,\hdots,N\}.  
\end{align}
Similarly, we define the cost function approximation in Eq.~\eqref{eq:rollout_approximation} as $\Tilde{J}(b) = \hat{g}(b, \mu(b))$. Finally, we use the rollout horizon $m=5$ and the lookahead horizon $\ell=1$.

\vspace{2mm}
\noindent\textit{\textbf{Signaling policies.}} We consider two signaling policies:
\begin{itemize}
  \item \textit{Base signaling policy $\widehat{\mu}(b)$.}
\begin{itemize}
\item This signaling policy uses the base policy $\mu$ to predict the controls of the other agents.
\end{itemize}
\item \textit{Neural network signaling policy $\widehat{\mu}_{\theta}(b)$.}
\begin{itemize}
\item This signaling policy is represented by a neural network that is trained off-line to imitate the multiagent rollout policy; cf.~Eq.~\eqref{eq:multiagent_rollout}. See Appendix \ref{app:neural_training} for details about the neural network training.
\end{itemize}
\end{itemize}

\vspace{2mm}
\noindent\textit{\textbf{System identification.}} We identify the observation distribution $p(z \mid y,u)$ of the POMDP from measurement data obtained by deploying $N=50$ service replicas on our testbed and running a sequence of emulated failure events; see Fig.~\ref{fig:dt_serverrack}. We define the length of a time step in our testbed to be $30$ seconds. During each step, we emulate failure events at a subset of replicas. We then measure the observation $z_k$ by reading log files. We repeat this procedure for $24,386$ time steps and use the empirical distribution of failure alerts to define the observation distribution $p(z \mid y, u)$ in the POMDP. The observation space is defined as $Z = \{0,1,\hdots,999\}^N$, i.e., the maximum number of alerts per replica is $999$. For details about our testbed setup, see Appendix \ref{app:testbed}.

Figure \ref{fig:obs_dist} shows the estimated observation distribution $p(z \mid y, u)$. We note that the observation distributions are conditionally independent across replicas. The reason for this conditional independence is that the failure alerts are generated by local monitoring systems that depend only on the local state of each replica. Moreover, we see in Fig.~\ref{fig:obs_dist} that the distributions differ between replicas. The reason for these differences is that while the replicas run the same service, they have different software versions. We also note that the observation distributions of a replica $i$ in the healthy state ($x^i=0$) and the faulty state ($x^i=1$) overlap. However, the observation distribution in the faulty state tends to put more probability mass on large values of $z^i$, as expected.

\subsection{Evaluation results}
In this section, we present our evaluation results. We start by presenting the results from two simulation studies. In the first study, we compare the scalability of autonomous multiagent rollout with the standard multiagent rollout and the singleagent version. In the second study, we compare the cost of the rollout policy with that of the base policy. After these two simulation studies, we present the results of a comprehensive testbed evaluation where we compare the performance of the rollout policies with that of the periodic policies used in practice. (Details about our testbed are provided in Appendix \ref{app:testbed}.)

\begin{figure}
  \centering
    \scalebox{0.045}{
      \includegraphics{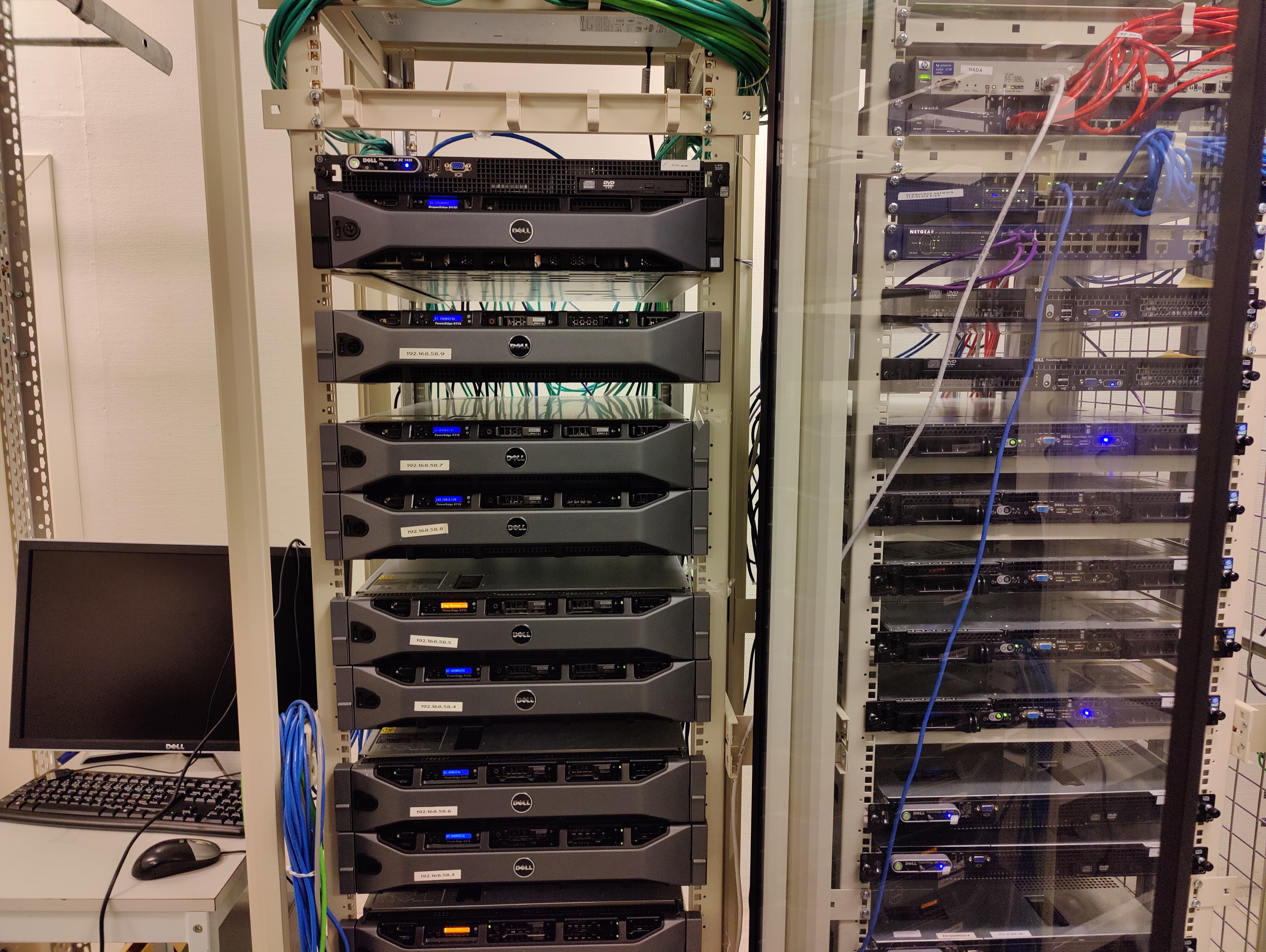}
    }
    \caption{The testbed that we use for system identification and evaluation \cite{dsn24_hammar_stadler}. It consists of physical servers connected through an IP network. We use these servers to run service replicas (in virtual containers), emulate failure events (in the form of cyberattacks, network partitions, and power outages), measure failure alerts, and execute recovery controls. The software platform that we use for emulation is available at \texttt{https://github.com/Kim-Hammar/csle}.}
    \label{fig:dt_serverrack}
\end{figure}

\vspace{2mm}
\noindent\textit{\textbf{Scalability evaluation.}} As described in Section \ref{sec:multiagent_rollout}, the main benefit of multiagent rollout compared to singleagent rollout is that it scales to systems with a large number of service replicas. Figure \ref{fig:scale} shows the compute time of different rollout methods as a function of the number of replicas $N$. 

We observe that singleagent rollout [cf.~Eq.~\eqref{eq:rollout}] with the belief estimator in Eq.~\eqref{eq:bayes_belief} becomes computationally intractable for systems with more than $N=3$ replicas. The reason for this intractability is the combinatorial explosion of the observation space as $N$ grows. In particular, as expressed in Prop. \ref{prop:complexity_1}, the branching factor of a lookahead step of the rollout method is $|U||Z|$, where $|Z|=1000^N$ in our case.

The particle filter in Eq.~\eqref{eq:estimate_belief} in combination with simulation-based estimation of the expected value in Eq.~\eqref{eq:rollout} allows to circumvent this computational intractability, as shown in the red curve of the upper plot of Fig.~\ref {fig:scale}. However, even with these approximation techniques, singleagent rollout quickly becomes impractical as $N$ grows, with compute times increasing steeply already for $N>8$. The reason for this rapid increase is the combinatorial explosion of the control space, which has size $|U|=2^N$, as illustrated in Fig.~\ref{fig:spaces}. 

In contrast, the compute times of multiagent rollout [cf.~Eq.~\eqref{eq:multiagent_rollout}] grow moderately with the number of replicas (near linear), as shown in the middle plot of Fig. \ref{fig:scale}. Autonomous multiagent rollout [cf.~Eq.~\eqref{eq:auto_multiagent_rollout}] further reduces the compute times through parallelization. As shown in the lower plot of Fig. \ref{fig:scale}, this parallelization results in a compute time that is nearly constant as the number of replicas ($N$) increases. 

\begin{figure}
  \centering
\scalebox{0.65}{
    \includegraphics{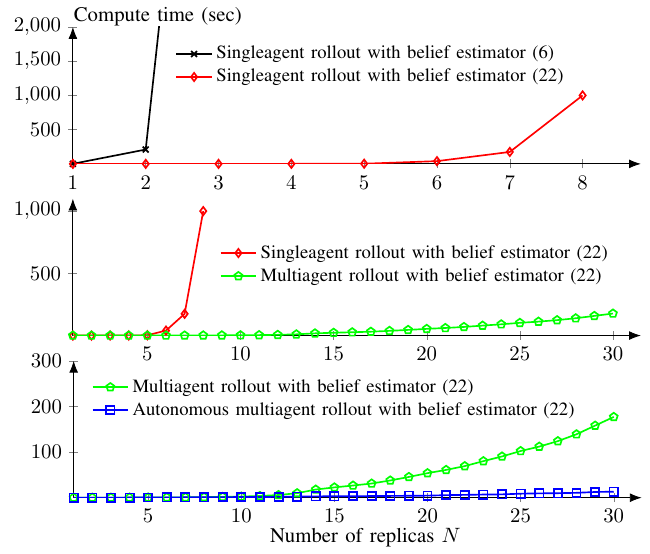}    
  }
  \caption{Compute time of different rollout methods as a function of the number of replicas $N$. All rollout methods are instantiated with lookahead horizon $\ell=1$, rollout horizon $m=5$, and $L=10$ simulations; cf.~Eq.~\eqref{eq:rollout}. The particle filter is instantiated with $M=50$ particles; cf.~Eq.~\eqref{eq:estimate_belief}. The numbers correspond to the average compute time across $3$ executions.}\label{fig:scale}
\end{figure}

\vspace{2mm}
\noindent\textit{\textbf{Cost evaluation.}} 
To compare the cost incurred by the rollout policies and the base policy, we consider two scenarios:
\begin{itemize}
\item \textit{Simulation scenario 1}: In this scenario, we configure the POMDP as described in Section~\ref{sec:experiment_setup}.
\item \textit{Simulation scenario 2}: In this scenario, we configure the POMDP as described in Section~\ref{sec:experiment_setup} except that we set the parameter $\lambda$ to $20$ instead of $1.5$. This change means that the stage cost [cf.~Eq.~\eqref{eq:stage_cost}] is dominated by a term that depends on the state and control of all $N$ replicas. As a consequence, the importance of coordination among the agents is larger in this scenario.
\end{itemize}    
\begin{figure}
  \centering
\scalebox{0.65}{
    \includegraphics{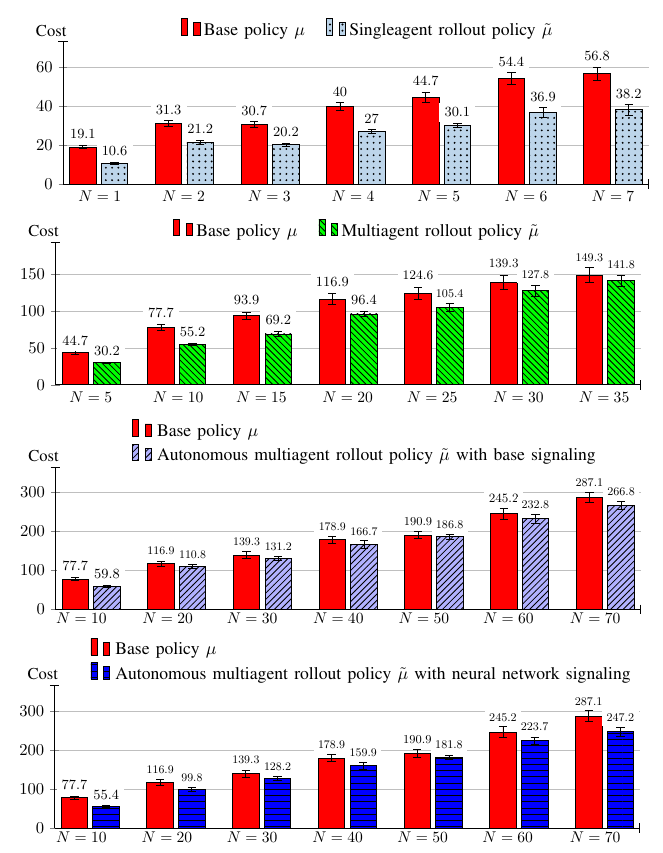}    
  }
  \caption{Simulation scenario 1: Comparison between the cost of the base policy $\mu$ [cf.~Eq.~\eqref{eq:base_policy}] and different rollout policies $\tilde{\mu}$. For all policies, we use particle filtering to implement the belief estimator; cf.~Eq.~(\ref{eq:estimate_belief}). We instantiate the rollout method with $M=50$ particles, lookahead horizon $\ell=1$, rollout horizon $m=5$, and $L=10$ simulations.}\label{fig:cost5}
\end{figure}

Figure \ref{fig:cost5} shows a comparison between the cost of the base policy $\mu$ and the rollout policy $\tilde{\mu}$ for increasing numbers of replicas $N$ in scenario $1$. We observe that singleagent rollout and multiagent rollout consistently improve upon the base policy across all system sizes, as expected from Prop.~\ref{prop:improvement} and Prop.~\ref{prop:multiagent_rollout_improvement}. For instance, with $N=7$ replicas, the base policy incurs a cost of $56.8$, whereas singleagent rollout reduces this cost to $38.2$, corresponding to an improvement of about $33\%$. Similar experimental results for singleagent rollout were observed by Liu et al. \cite{LIU2022110473}. Likewise, for $N=30$, the base policy incurs a cost of $139.3$, whereas multiagent rollout reduces this cost to $127.8$, corresponding to an improvement of about 9\%. We also note that autonomous multiagent rollout with both of the signaling policies described in Section~\ref{sec:experiment_setup} achieves a lower cost than the base policy, but a slightly higher cost than multiagent rollout. The difference in cost achieved by autonomous multiagent rollout when changing the signaling policy from the base policy $\widehat{\mu}$ to the neural network signaling policy $\widehat{\mu}_{\theta}$ is relatively small. This indicates that the need for coordination among the agents is rather low in this scenario.

\begin{figure}
  \centering
\scalebox{0.65}{
    \includegraphics{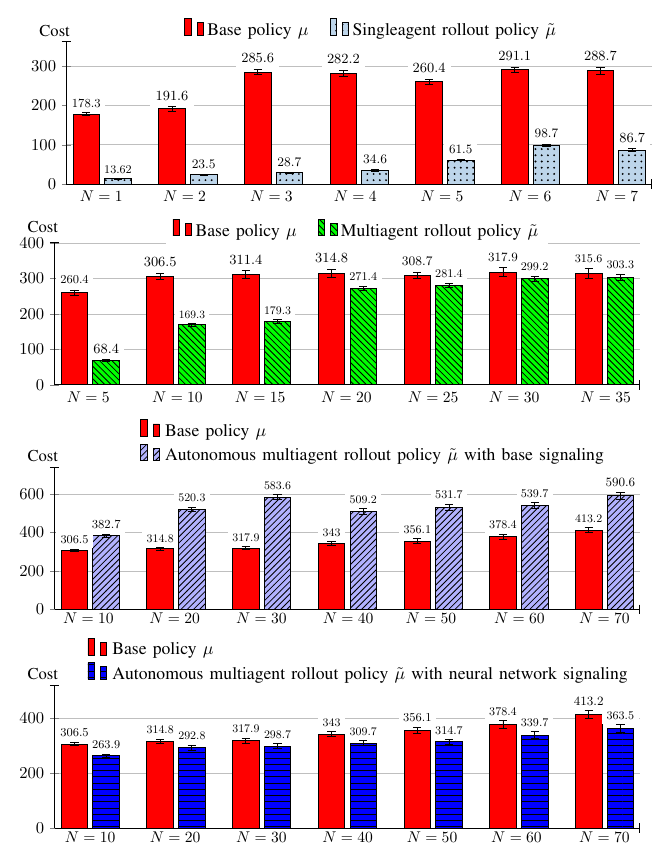}    
  }
  \caption{Simulation scenario 2: Comparison between the cost of the base policy $\mu$ [cf.~Eq.~\eqref{eq:base_policy}] and different rollout policies $\tilde{\mu}$. For all policies, we use particle filtering to implement the belief estimator; cf.~Eq.~(\ref{eq:estimate_belief}). We instantiate the rollout method with $M=50$ particles, lookahead horizon $\ell=1$, rollout horizon $m=5$, and $L=10$ simulations.}\label{fig:cost8}
\end{figure}

Figure \ref{fig:cost8} shows a comparison between the cost of the base policy $\mu$ and the rollout policy $\tilde{\mu}$ for increasing numbers of replicas $N$ in scenario $2$. Similar to the results of scenario $1$, we observe that singleagent rollout and multiagent rollout consistently improve upon the base policy across all system sizes. For instance, with $N=7$ replicas, the base policy incurs a cost of $288.7$, whereas singleagent rollout reduces this cost to $86.7$, corresponding to an improvement of about $70\%$. Similarly, for $N=15$, the base policy incurs a cost of $311.4$, whereas multiagent rollout reduces this cost to $179.3$, corresponding to an improvement of about $42$\%. 

However, in contrast to scenario $1$, we note that the cost incurred by autonomous multiagent rollout with the base policy as signaling policy leads to a \textit{higher} cost than the base policy. For example, with $N=70$ replicas, the base policy incurs a cost of $413.2$, whereas autonomous multiagent rollout with the base policy as signaling policy increases this cost to $590.6$, corresponding to an increase of about $43$\%. On the other hand, when using the neural network signaling policy, autonomous multiagent rollout incurs a lower cost than the base policy. Similar experimental results were observed by Bhattacharya et al. when applying the autonomous multiagent rollout method to a robot-repair problem \cite{pmlr-v155-bhattacharya21a}.

Finally, Fig.~\ref{fig:signaling1} compares the performance of the autonomous multiagent rollout policy [cf.~Eq.~\eqref{eq:auto_multiagent_rollout}] with the neural network signaling policy. We observe that both the rollout and signaling policies outperform the base policy, with the rollout policy achieving the best overall performance. 

\begin{figure}
  \centering
\scalebox{0.65}{
    \includegraphics{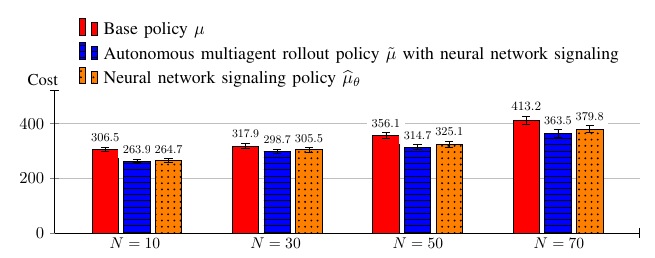}    
  }
  \caption{Simulation scenario 2: Comparison between the cost of the base policy $\mu$ [cf.~Eq.~\eqref{eq:base_policy}], the autonomous multiagent rollout policy $\tilde{\mu}$ [cf.~Eq.~\eqref{eq:auto_multiagent_rollout}], and the neural network signaling policy $\widehat{\mu}_{\theta}$. For all policies, we use particle filtering to implement the belief estimator; cf.~Eq.~(\ref{eq:estimate_belief}). We instantiate the rollout method with $M=50$ particles, lookahead horizon $\ell=1$, rollout horizon $m=5$, and $L=10$ simulations.}\label{fig:signaling1}
\end{figure}

\vspace{2mm}
\noindent\textit{\textbf{Testbed evaluation.}} To evaluate the potential benefit of our rollout method compared to the periodic recovery control policies used in practice, we run our method on a testbed where we execute real service replicas and failure events. For details about our testbed setup, see Appendix \ref{app:testbed}. 

The testbed evaluation is different from the simulation studies in the following ways. First, in the simulation studies, the recovery controls are artificial in the sense that they have no effect on an operational system. By contrast, in the testbed evaluation, each control $u^i$ corresponds to the recovery of an operational service replica. Second, the observations in the simulations are sampled from the distribution in Fig.~\ref{fig:obs_dist}. By contrast, in the testbed evaluation, each observation $z^i$ is measured from logs generated by an operational system. As a consequence, the evaluation on the testbed takes much longer than the simulation-based evaluations. In particular, each time step on the testbed takes around $30$ seconds to complete the recovery and allow the effects to propagate through the system. In our evaluation, we run more than $5000$ time steps to evaluate the performance of different control policies under different conditions. Consequently, the evaluation takes about two days of continuous execution on our testbed.

We run two testbed scenarios:
\begin{itemize}
\item \textit{Testbed scenario 1}: In this scenario, we run a system with $N=10$ service replicas on our testbed. Due to the relatively small number of replicas, we can apply both multiagent rollout and autonomous multiagent rollout.
\item \textit{Testbed scenario 2}: In this scenario, we run a system with $N=50$ service replicas on our testbed. Due to the large number of replicas, the only practical rollout method is the autonomous multiagent rollout method.
\end{itemize}
\begin{figure}
  \centering
\scalebox{0.65}{
    \includegraphics{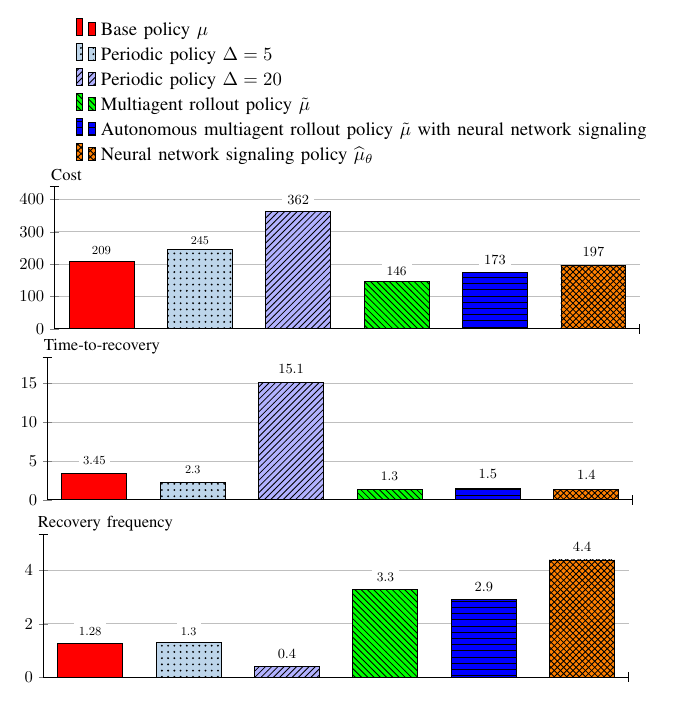}    
  }
  \caption{Testbed scenario $1$: a system with $N=10$ service replicas. The figure shows a performance comparison between multiagent rollout policies [cf.~Eqs.~\eqref{eq:multiagent_rollout}--\eqref{eq:auto_multiagent_rollout}], the base policy [cf.~Eq.~\eqref{eq:base_policy}], the neural network signaling policy $\widehat{\mu}_{\theta}$, and periodic recovery policies with period $\Delta$. For all policies, we use particle filtering to implement the belief estimator; cf.~Eq.~(\ref{eq:estimate_belief}). We instantiate the rollout method with $M=50$ particles, lookahead horizon $\ell=1$, rollout horizon $m=5$, and $L=10$ simulations.}\label{fig:testbed}
\end{figure}

In both scenarios, we configure the parameters of the POMDP in the same way as in Simulation scenario $2$, except for the failure-dependency matrix $A$, which depends on the testbed configuration.\footnote{Our testbed supports $10$ different software versions for the replicas; the versions are listed in \cite[Table 4.5]{kim_phd_thesis}. When we start a replica in the testbed, we select the version uniformly at random. If replicas $i$ and $j$ share the same software version, they are vulnerable to the same types of failures. We model this dependency by setting $A_{ij} = A_{ji} = 1$ if replicas $i$ and $j$ run the same version, and $A_{ij}=A_{ji} = 0$ otherwise.} We compare the performance of the rollout policies with the base policy $\mu$ [cf.~Eq.~\eqref{eq:base_policy}] and two periodic recovery policies. The first periodic policy recovers a replica every $\Delta=5$ time steps, and the second policy recovers a replica every $\Delta=20$ time steps. (We introduce an offset between replicas so that they do not recover simultaneously.)

We consider three evaluation metrics: the cost [cf.~Eq.~\eqref{eq:stage_cost}], the time-to-recovery (i.e., the average time from failure to recovery of a replica), and the recovery frequency (i.e., how many replicas are recovered per time step on average).

The evaluation results are shown in Figs. \ref{fig:testbed}--\ref{fig:testbed2}. We observe that multiagent rollout significantly outperforms both the base policy and the periodic policies across all metrics. For example, the average time-to-recovery of multiagent rollout in Testbed scenario $1$ is $1.3$, whereas the average time-to-recovery of the periodic policy with recovery period $\Delta=20$ is $15.1$. Similarly, the cost of autonomous multiagent rollout is $173$, whereas the cost of the base policy is $209$. 

Moreover, we note that multiagent rollout incurs a slightly lower cost than that of the autonomous version in scenario $1$ ($146$ versus $173$). However, the multiagent rollout method is only applicable to systems with a relatively small number of service replicas. In particular, we were unable to evaluate the multiagent rollout policy in Testbed scenario $2$ due to computational reasons. By contrast, the autonomous multiagent rollout method scales favorably to systems with many replicas.

When comparing the autonomous multiagent rollout policy with the neural network signaling policy and the base policy, we observe that autonomous multiagent rollout achieves the lowest cost in both scenarios. However, the difference in cost between the neural network signaling policy and the autonomous multiagent rollout policy is relatively low.

\begin{figure}
  \centering
\scalebox{0.65}{
    \includegraphics{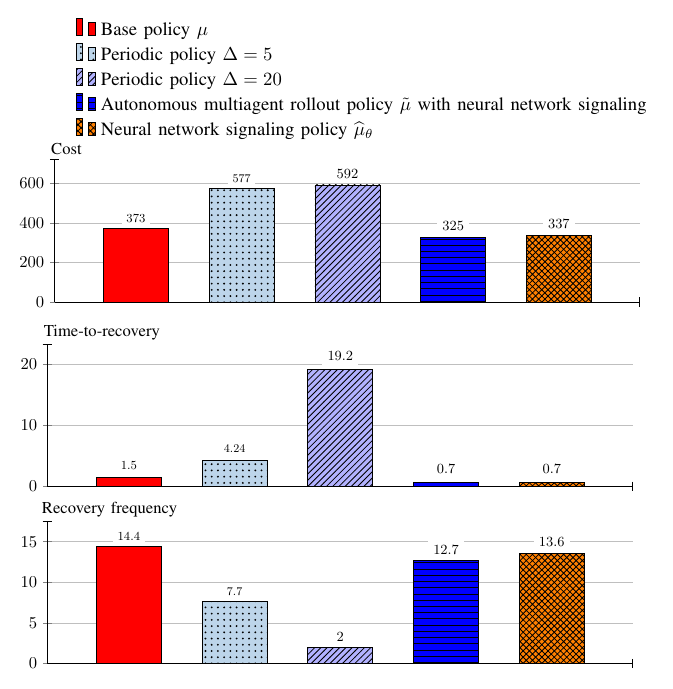}    
  }
  \caption{Testbed scenario $2$: a system with $N=50$ service replicas. The figure shows a performance comparison between the autonomous multiagent rollout policy [cf.~Eq.~\eqref{eq:auto_multiagent_rollout}], the base policy [cf.~Eq.~\eqref{eq:base_policy}], the neural network signaling policy $\widehat{\mu}_{\theta}$, and periodic recovery policies with period $\Delta \in \{5, 20\}$. For all policies, we use particle filtering to implement the belief estimator; cf.~Eq.~(\ref{eq:estimate_belief}). We instantiate the rollout method with $M=50$ particles, lookahead horizon $\ell=1$, rollout horizon $m=5$, and $L=10$ simulations.}\label{fig:testbed2}
\end{figure}
\section{Discussion of the Evaluation Results}
In summary, our main experimental findings are:
\begin{itemize}
\item \textit{Multiagent rollout is more scalable than singleagent rollout for systems with combinatorial control spaces.}
\begin{itemize}
\item In our setup, singleagent rollout becomes computationally intractable for systems with more than a few replicas ($N>8$) due to the combinatorial explosion of the control space. In contrast, multiagent rollout scales near-linearly, and autonomous multiagent rollout achieves nearly constant compute time as the number of replicas grows; see~Fig.~\ref{fig:scale}.
\end{itemize}
\item \textit{Rollout policies consistently outperform the base policy.}
\begin{itemize}
\item Across all system sizes and evaluation scenarios, the rollout policy with a suitable signaling policy leads to a significant cost reduction compared to the base policy. A theoretical explanation for the large cost reduction is provided in the textbook \cite{bertsekas2022lessons} by Bertsekas, where it is shown that the rollout computation performs one step of Newton’s method for solving Bellman’s equation. As a consequence, if the base policy is sufficiently close to optimal, the rollout policy can exhibit a superlinear rate of convergence to the optimal, just like Newton’s method.
\end{itemize}
\item \textit{Autonomous multiagent rollout with a learned signaling policy is nearly as effective as multiagent rollout.}
\begin{itemize}
\item While the multiagent rollout policy performs better than the autonomous version in all evaluation scenarios, the performance difference is relatively small; cf.~Figs.~\ref{fig:cost5}--\ref{fig:testbed}. In particular, when taking both scalability and performance into account, the autonomous multiagent rollout method stands out as the most suitable for large-scale systems; see~Fig.~\ref{fig:scale}.
\end{itemize}    
\item \textit{Autonomous multiagent rollout significantly outperforms the periodic recovery policies used in practice.}
\begin{itemize}
\item The testbed evaluation shows that our method reduces costs by up to 45\% compared to the periodic policies currently used in practice; cf.~Fig.~\ref{fig:testbed2}.
\end{itemize} 
\end{itemize}    
\section{Conclusion}
We addressed the problem of recovery control in replicated computing systems. We formulated this problem as a POMDP with a multiagent structure. We exploited this structure to apply a multiagent rollout method for approximating optimal recovery control policies. Through this formulation, our method overcomes the combinatorial complexity that renders standard rollout methods intractable for systems with more than a few replicas. Moreover, by using precomputed signaling information, our method reduces the need for replica coordination and facilitates parallel computations. We evaluated our system through extensive simulations and testbed experiments. The evaluation results confirm that our method scales to systems with a large number of replicas and significantly reduces operational costs compared to the heuristic recovery policies commonly used in operational systems. Furthermore, we characterized the computational complexity of our method and defined theoretical conditions under which it has an approximate cost improvement property.

\section*{Acknowledgment}
The authors are deeply grateful to Prof. Dimitri Bertsekas, who passed away on June 3, 2026, for his insightful comments and encouragement during the development of this paper.

\appendices

\section{Dynamic Programming Preliminaries}\label{app:prelim}
Before presenting the proofs of Props.~\ref{prop:improvement} and \ref{prop:multiagent_rollout_improvement}, we state some standard definitions and results from dynamic programming that will be used repeatedly in the analysis. In our terminology, all equations, inequalities, and convergence limits involving functions are meant to be pointwise.

Let $T$ and $T_{\mu}$ be Bellman operators defined as 
\begin{align*}
(TJ)(b) &= \min_{u \in U}\left[\hat{g}(b, u) + \alpha \sum_{z \in Z}\hat{p}(z \mid b, u)J(F(b,u,z))\right],\\
(T_{\mu}J)(b) &= \hat{g}(b, \mu(b)) + \alpha \sum_{z \in Z}\hat{p}(z \mid b, \mu(b))J(F(b,\mu(b),z)),
\end{align*}
for all beliefs $b \in B$. As is well known, these operators are monotonic, i.e., if $J\leq J'$, then $TJ\leq TJ'$ and $T_{\mu}J\leq T_{\mu}J'$; see, e.g., \cite[Lem.~1.1.1]{bertsekas2012dynamic}. Moreover, they have the constant shift property, i.e., $T^{\ell}(J+\epsilon e) = T^{\ell}J + \alpha^{\ell}\epsilon e$ and $T^{\ell}_{\mu}(J+\epsilon e) = T^{\ell}_{\mu}J + \alpha^{\ell}\epsilon e$, where $e$ is the unit function i.e., $e(b)\equiv1$, and $\ell \geq 1$ is an integer; see e.g., \cite[Lem.~1.1.2]{bertsekas2012dynamic}. Furthermore, our proof relies on the following classical results; cf. \cite[Prop.~1.2.4]{bertsekas2012dynamic}.
\begin{proposition}\label{prop:bert_upper}
For every policy $\mu$, we have the following:
\begin{itemize}
    \item[(a)] Its associated cost function $J_\mu$ satisfies $J_\mu=T_\mu J_\mu$.
    \item[(b)] If a bounded function $J$ satisfies $T_{\mu}J \leq J$, then $J_{\mu} \leq J$.
\end{itemize}
\end{proposition}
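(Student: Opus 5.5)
The plan is to work in the Banach space of bounded real-valued functions on $B$ with the maximum norm $\norm{\cdot}$, and to use the fact that $T_\mu$ is an $\alpha$-contraction there. Both (a) and (b) then follow from identifying $J_\mu$ as the limit of value iteration under $T_\mu$ started from the zero function.

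\textbf{Step 1 (boundedness and contraction).} Since $X$ and $U$ are finite and $g$ is finite-valued, the stage cost $\hat{g}(b,u)$ in Eq.~\eqref{eq:hat_g} is bounded on $B\times U$. Hence $T_\mu$ maps bounded functions to bounded functions. For bounded $J,J'$, the weights $\hat{p}(z\mid b,\mu(b))$ are nonnegative and sum to one over $z\in Z$. This gives
\begin{align*}
|(T_\mu J)(b)-(T_\mu J')(b)|\leq \alpha\sum_{z\in Z}\hat{p}(z\mid b,\mu(b))\,\norm{J-J'} = \alpha\norm{J-J'}.
\end{align*}
Since $\alpha<1$, $T_\mu$ is a contraction and has a unique bounded fixed point, by the Banach fixed point theorem. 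Moreover, $T_\mu^k J\to$ this fixed point uniformly for every bounded $J$.

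\textbf{Step 2 (identifying $J_\mu$).} I would show by induction on $H$ that
\begin{align*}
E\Big\{\sum_{k=0}^{H-1}\alpha^k\hat{g}\big(b_k,\mu(b_k)\big)\Big\}=(T_\mu^H \bar{0})(b_0),
\end{align*}
where $\bar{0}$ is the zero function. The inductive step conditions on the first observation $z_1$. This uses that, under $\mu$, the belief process $b_{k+1}=F(b_k,\mu(b_k),z_{k+1})$ is a Markov chain whose observations are distributed according to $\hat{p}(z\mid b_k,\mu(b_k))$ in Eq.~\eqref{eq:hat_p}. This is the standard sufficient-statistic property of the belief, and it follows from the Bayes recursion \eqref{eq:bayes_belief}. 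Taking $H\to\infty$ in Eq.~\eqref{eq:pomdp_minimization} shows that the limit exists and that $J_\mu=\lim_H T_\mu^H\bar 0$. By Step 1, $J_\mu$ is therefore the unique fixed point of $T_\mu$, which proves (a).

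\textbf{Step 3 (part (b)).} Suppose the bounded function $J$ satisfies $T_\mu J\leq J$. Applying the monotone operator $T_\mu$ repeatedly gives the chain
\begin{align*}
T_\mu^{k+1}J\leq T_\mu^k J\leq\cdots\leq J \quad\text{for all } k.
\end{align*}
By Step 1, $T_\mu^k J\to J_\mu$ pointwise. Passing to the limit yields $J_\mu\leq J$.

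The main obstacle is Step 2. There one must justify rigorously that the expectation over the POMDP trajectory coincides with iterating $T_\mu$ on the belief space, which rests on the belief being a sufficient statistic with transition law given by $\hat{p}$ and $F$. Since all sets are finite, I would handle this by writing the joint distribution of $(x_0,\dots,x_H,z_1,\dots,z_H)$ explicitly. I would then verify that conditioning on $z_1$ reduces the tail sum to the same expression started from $F(b_0,\mu(b_0),z_1)$. The remaining steps are routine contraction and monotonicity arguments; cf.\ \cite[Prop.~1.2.4]{bertsekas2012dynamic}.
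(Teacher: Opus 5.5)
Your proof is correct and is essentially the argument behind the classical results the paper invokes. The paper gives no proof of its own; it cites \cite[Prop.~1.2.4]{bertsekas2012dynamic} and Bertsekas's 1977 monotone-mappings paper, where (a) and (b) follow from the sup-norm contraction of $T_\mu$, convergence of $T_\mu^k J$ to $J_\mu$, and the monotone chain $J \geq T_\mu J \geq T_\mu^2 J \geq \cdots$, exactly as in your Steps 1 and 3. Your Step 2, showing that the finite-horizon cost equals $(T_\mu^H \bar{0})(b_0)$ via the Markov property of the belief process under $\hat{p}$ and $F$, fills in the one POMDP-specific step that the citation leaves implicit.
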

The proof of part (b) was originally given by Bertsekas and can be found in \cite[Prop. 1b]{doi:10.1137/0315031}.
\section{Proof of Proposition \ref{prop:improvement}}\label{app:proof_prop_1}
We start by establishing the existence of a constant $c$ such that inequality \eqref{eq:lyapunov_rollout} holds. Since the functions $\tilde J_\mu$ and $\hat{g}$ are bounded [cf.~Eqs.~\eqref{eq:stage_cost} and \eqref{eq:rollout_approximation}], we can choose a constant $c$ sufficiently large to satisfy the inequality. As a result, $\epsilon < \infty$.

Assume that the infimum in Eq.~\eqref{eq:epsilon_approx_rollout} is attained at $\hat c$. Consider the function $\hat{J}_\mu=\tilde J_\mu+\hat c e$. Inequality \eqref{eq:lyapunov_rollout} states that $T\hat J_\mu\leq \hat J_\mu$. By monotonicity of $T$, we have $T^{\ell}\hat J_\mu\leq T^{\ell-1}\hat J_\mu$. Similarly, applying $T_{\tilde \mu}$ to both sides, we have 
\begin{align}
T_{\tilde \mu}(T^{\ell}\hat J_\mu)\leq T_{\tilde \mu}(T^{\ell-1}\hat J_\mu). \label{proof_prop1_step1}
\end{align}
Equation~\eqref{eq:rollout} implies that $T_{\tilde \mu}(T^{\ell-1}\tilde J_\mu)=T^{\ell}\tilde J_\mu$, which means that $T_{\tilde \mu}(T^{\ell-1}\hat J_\mu)=T^{\ell}\hat J_\mu$. Combining this equality with inequality \eqref{proof_prop1_step1}, we obtain $T_{\tilde \mu}(T^{\ell}\hat J_\mu)\leq T^{\ell}\hat J_\mu$. This inequality means that $J_{\tilde \mu}\leq T^\ell\hat J_\mu$; cf.~Prop.~\ref{prop:bert_upper}b. In view of Eq.~\eqref{eq:epsilon_approx_rollout}, we have $T^{\ell}\hat J_\mu\leq T^{\ell} (J_\mu+\epsilon e) \leq T^{\ell}J_\mu+\alpha^{\ell}\epsilon e$, where the last inequality follows from the constant shift property. From the definitions $T_\mu$ and $T$, we have $TJ_{\mu}\leq T_{\mu}J_{\mu}=J_{\mu}$, which implies that $T^{\ell}J_\mu \leq T_{\mu}^{\ell}J_\mu =J_{\mu}$, where the equalities are due to Prop~\ref{prop:bert_upper}a. Consequently, $T^{\ell}J_\mu+\alpha^{\ell}\epsilon e\leq J_\mu+\alpha^{\ell}\epsilon e$. As a result, we obtain the desired inequality $J_{\tilde \mu}\leq J_\mu+\alpha^{\ell}\epsilon e$.

Lastly, let us consider the case where the infimum of Eq.~\eqref{eq:epsilon_approx_rollout} is not attained. In this case, there exists a sequence $\{\hat c_k\}$ such that each $\hat c_k$ satisfies the constraint \eqref{eq:lyapunov_rollout}, and $\epsilon_k$ converges to $\epsilon$, where $\epsilon_k=\Vert \tilde J_\mu+\hat c_ke -J_\mu\Vert$. Using the preceding arguments with $\hat c_k$ in place of $\hat c$, and $\epsilon_k$ in place of $\epsilon$, we can show that $J_{\tilde \mu}\leq J_\mu+\alpha^{\ell}\epsilon_k e$ for all $k$. Taking the limit on the right-hand side, we get the desired inequality. \qed

\section{Proof of Proposition \ref{prop:multiagent_rollout_improvement}}\label{app:proof_prop_3}
We start by showing that $\hat\epsilon<\infty$; cf.~Eq.~\eqref{eq:epsilon_approx_multi_rollout}. To see this, consider $c=\Vert T_{\tilde\mu}\tilde J_\mu-\tilde J_\mu\Vert/(1-\alpha)$. Clearly, the value $c$ satisfies the constraint \eqref{eq:lyapunov_multi_rollout}, which implies that $\hat\epsilon<\infty$.

Suppose that the infimum in Eq.~\eqref{eq:epsilon_approx_multi_rollout} is attained at $\hat c$. Consider the function $\hat J_\mu=\tilde J_\mu+\hat c e$. Inequality \eqref{eq:lyapunov_multi_rollout} states that $T_{\tilde\mu}\hat J_\mu\leq \hat J_\mu$, which implies $J_{\tilde \mu}\leq \hat J_\mu$; cf.~Prop.~\ref{prop:bert_upper}.b. In addition, Eq.~\eqref{eq:epsilon_approx_multi_rollout} implies that $\hat J_\mu\leq J_\mu+\hat\epsilon e$. Combining this inequality with $J_{\tilde \mu}\leq \hat J_\mu$, we obtain the desired result $J_{\tilde \mu}\leq J_\mu+\hat\epsilon e$.
        
The case where the infimum $\hat\epsilon$ is not attained follows similarly to that of Prop.~\ref{prop:improvement} and is thus omitted. \qed

\section{Notation}\label{app:notation}
Our notation is summarized in Table \ref{tab:notation}.

\begin{table}
  \centering
  \scalebox{0.7}{
    \begin{tabular}{ll} \toprule
\rowcolor{lightgray}
      {\textit{Notation(s)}} & {\textit{Description}} \\ \midrule
      $X,Z,U,B$ & State, observation, control, and belief spaces; cf.~Section~\ref{sec:recovery_pomdp}.  \\
      $N$ & Number of replicas; cf.~Section~\ref{sec:system_model}.  \\
      $f$ & Tolerance threshold; cf.~Eq.~\eqref{eq:downtime_cost}.  \\      
      $i$ & Replica index; cf.~Section~\ref{sec:system_model}.\\            
      $M$ & Number of particles; cf.~Eq.~\eqref{eq:estimate_belief}  \\
      $\hat{b}$ & Belief estimated via particle filtering; cf.~Eq.~\eqref{eq:estimate_belief}  \\      
      $w$ & Maximum number of failure alerts per replica; cf.~Fig.~\ref{fig:spaces}.  \\           
      $F$ & Belief estimator; cf.~Section~\ref{sec:recovery_pomdp}.  \\            
      $x,y$ & State values; cf.~Section~\ref{sec:recovery_pomdp}.  \\
      $p_{xy}(u)$ & Probability of the state transition $x\rightarrow y$ under control $u$; cf.~Section~\ref{sec:recovery_pomdp}.  \\
      $p(z \mid y,u)$ & Probability of the observation $z$ given state $y$ and control $u$; cf.~Section~\ref{sec:recovery_pomdp}.  \\
      $g(x,u,y)$ & Stage cost function; cf.~Section~\ref{sec:recovery_pomdp}.  \\
      $\hat{g}(b,u)$ & Expected stage cost given belief $b$ and control $u$; cf.~Section~\ref{sec:recovery_pomdp}. \\
      $X^i$ & Set of states where $x^i=1$; cf.~Section~\ref{sec:recovery_pomdp}. \\
      $b^i$ & Marginal belief that $x^i=1$; cf.~Eq.~\ref{eq:marginal_b}. \\            
      $\hat{p}(z \mid b, u)$ & Probability of observation $z$ given belief $b$ and control $u$; cf.~Section~\ref{sec:recovery_pomdp}. \\
      $p_{\mathrm{F}}$ & Factor for the failure probability; cf.~Eq.~\eqref{eq:failure_prob}.\\
      $A$ & Failure-dependency matrix; cf.~Eq.~\eqref{eq:failure_prob}.\\
      $\lambda$ & Factor for the service disruption cost; cf.~Eq.~\eqref{eq:stage_cost}.\\
      $\eta$ & Factor for the failure cost; cf.~Eq.~\eqref{eq:stage_cost}.\\
      $\phi(x,u)$ & Function determining the service disruption cost; cf.~Eq.~\eqref{eq:downtime_cost}.\\                                          
      $\alpha$ & Discount factor; cf.~Section~\ref{sec:recovery_pomdp}.  \\            
      $k$ & Time step; cf.~Section~\ref{sec:recovery_pomdp}.  \\
      $x^i,b^i,u^i,z^i$ & State, belief, control, and observation related to replica $i$; cf.~Section~\ref{sec:recovery_pomdp}.\\      
      $x_k,b_k,u_k,z_k$ & State, belief, control, and observation at time step $k$; cf.~Section~\ref{sec:recovery_pomdp}. \\
      $J^{*}$ & Optimal cost function; cf.~Section~\ref{sec:recovery_pomdp}.  \\
      $J_{\mu}$ & Cost function of policy $\mu$; cf.~Section~\ref{sec:recovery_pomdp}.  \\
      $\mu$ & Base policy; cf.~Eq.~\eqref{eq:rollout}.  \\
      $\tilde{\mu}$ & Rollout policy and its variants; cf.~Eq.~\eqref{eq:rollout}. \\      
      $\mu^i$ & Control component prescribed for replica $i$ by policy $\mu$. \\      
      $L$ & Number of rollout simulations; cf.~Eq.~\eqref{eq:rollout}.  \\
      $m$ & Rollout horizon; cf.~Eq.~\eqref{eq:rollout}.  \\
      $\ell$ & Lookahead horizon; cf.~Eq.~\eqref{eq:rollout}.  \\
      $\Tilde{J}$ & Terminal cost function approximation; cf.~Eq.~\eqref{eq:rollout_approximation}.  \\
      $\Tilde{J}_{\mu}$ & Estimated cost function of the base policy $\mu$; cf.~Eq.~\eqref{eq:rollout_approximation}. \\      
      $\widehat{\mu}$ & Base signaling policy; cf.~Section~\ref{sec:multiagent_rollout}.  \\
      $\widehat{\mu}_{\theta}$ & Neural network signaling policy; cf.~Section~\ref{sec:experiment_setup}.\\
      $\theta$ & Parameters of a neural network; cf.~Appendix~\ref{app:neural_training}.\\
      $\gamma$ & Learning rate for neural network training; cf.~Appendix~\ref{app:neural_training}. \\
      $\mathcal{M}$ & Minibatch for neural network training; cf.~Appendix~\ref{app:neural_training}. \\
      $r$ & Minibatch size for neural network training; cf.~Appendix~\ref{app:neural_training}. \\      
      $\mathcal{L}$ & Loss function for neural network training; cf.~Appendix~\ref{app:neural_training}. \\
      $\hat{u}^i$ & Control predicted for replica $i$ by the neural network; cf.~Appendix~\ref{app:neural_training}.\\
      $\tilde{u}_j$ & $j$th example of multiagent control for neural network training; cf.~Appendix~\ref{app:neural_training}.\\
      $q$ & Size of training dataset; cf.~Appendix~\ref{app:neural_training}.\\
      $\norm{\cdot}$ & Maximum norm; cf.~Prop.~\ref{prop:improvement}.\\
      $E\{\cdot\}$ & Expectation operator; cf.~Section~\ref{sec:recovery_pomdp}.\\
      $\delta_{ij}$ & Kronecker's delta; cf.~Eq.~\eqref{eq:estimate_belief}.\\
      \bottomrule\\
  \end{tabular}}
  \caption{Notation.}\label{tab:notation}
\end{table}

\section{Neural Network Training}\label{app:neural_training}
To obtain the signaling policy $\widehat{\mu}_{\theta}$, we implement a neural network with $4$ hidden layers, each containing $64$ neurons. The network parameters are denoted by $\theta \in \Re^{d}$. All hidden layers use the ReLU activation function, while the output layer consists of $N$ neurons with Sigmoid activations. Thus, the neural network outputs a vector in $\Re^{N}$ where the $i$th component is a value in $[0,1]$ and can be interpreted as the probability of applying control $1$ to replica $i$.

Since the belief space $B$ has dimension $2^N$ in our model, it is computationally infeasible to have the input layer of the neural network match the dimension of $B$. For this reason, we define the input layer to have $N$ neurons instead of $2^N$. To encode a belief vector $b$ as input, we map it to the vector of marginal failure probabilities under belief $b$ [cf.~Eq.~\eqref{eq:marginal_b}], i.e.,
\begin{align*}
(b^1, b^2, \hdots, b^N).
\end{align*}
To train the parameters of the neural network, we start by generating a training dataset of belief-control pairs. We do so by simulating the POMDP and applying the controls prescribed by the multiagent rollout policy; cf.~Eq.~\eqref{eq:multiagent_rollout}. Such simulations allow us to collect a dataset of the form 
\begin{align}
(b_j, \tilde{u}_j), && j=1,\hdots,q, \label{eq:dataset}
\end{align}
where $b_j$ is the $j$th belief state in the dataset, $\tilde{u}_j$ is the corresponding control prescribed by multiagent rollout, and $q$ is the size of the dataset. For the experiments reported in this paper, we use a dataset of size $q=10^5$.

Given the dataset in Eq.~\eqref{eq:dataset}, we train the parameters $\theta$ via incremental gradient descent to minimize the component-wise cross-entropy loss function
\begin{align}
\mathcal{L}_j(\theta) &= -\frac{1}{N}\sum_{i=1}^N\Big(\tilde{u}_j^i\ln \hat{u}_j^i + (1-\tilde{u}_j^i)\ln(1-\hat{u}_j^i)\Big),\label{eq:loss}
\end{align}
where $\mathcal{L}_j(\theta)$ is the loss for the belief-control pair $(b_j, \tilde{u}_j)$ in the training dataset and $\hat{u}_j$ is the output of the neural network.

Using the loss function in Eq.~\eqref{eq:loss}, we iteratively update the parameters $\theta$ according to
\begin{align}\label{eq:inc_gradient}
\theta^{t+1} &= \theta^t - \gamma\sum_{j\in \mathcal{M}}\nabla \mathcal{L}_j(\theta^t), & t=1,2,\hdots,
\end{align}
where $t$ is the training iteration, $\gamma$ is the step size, and $\mathcal{M}$ is a set of $r < q$ indices of sampled belief-control pairs from the dataset in Eq.~\eqref{eq:dataset}. For the experiments reported in this paper, we use batch size $r=512$ and learning rate $\gamma=0.0001$.

After training the parameters $\theta$, we use the neural network to implement the signaling policy $\widehat{\mu}_{\theta}$ as follows
\begin{align}\label{eq:imitation_signaling}
  \widehat{\mu}^i_{\theta}(b) &=
                       \begin{dcases}
                         1 & \text{ if }\hat{u}^i > 0.5,\\
                         0 & \text{otherwise},
                       \end{dcases}  
\end{align}
where $\hat{u}$ is the probability vector $\hat{u} \in [0,1]^N$ that is output by the neural network. Figure \ref{fig:imitation} shows the loss and accuracy of the neural network during training.

\begin{figure}[H]
  \centering
\scalebox{0.65}{
    \includegraphics{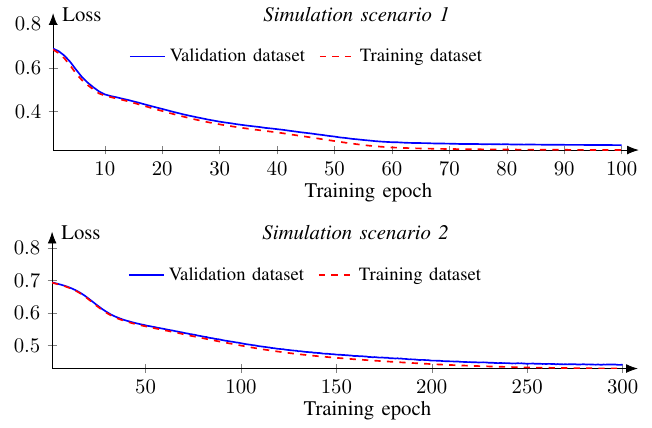}
  }
\caption{Loss of the signaling policy $\widehat{\mu}_{\theta}$ on the training and validation datasets with $N=10$; cf.~Eq.~\eqref{eq:loss}. The dataset in Eq.~\eqref{eq:dataset} is divided into two parts: $80\%$ is used for training and $20\%$ is reserved for validation. A training epoch refers to one complete pass of the incremental gradient algorithm [cf.~Eq.~\eqref{eq:inc_gradient}] through the entire training dataset.}\label{fig:imitation}
\end{figure}

\section{Auxiliary Computational Results}\label{app:extra_results}
In this appendix, we provide additional experimental results that provide insights into computational aspects of our method but are not central to the main research questions addressed in the paper. In particular, these results illustrate the impact of various configuration parameters of our method.

Figure \ref{fig:scale2} shows the compute time of the singleagent rollout method for varying configurations of the lookahead horizon $\ell$, the rollout horizon $m$, the number of simulations $L$, and the number of belief particles $M$. As expected, increasing any of these parameters results in additional compute time. However, we observe that increasing $m,L,M$ has a low impact on the compute time. In particular, these parameters can be scaled up to $10^3$ while still keeping the total compute time below 5 seconds for $N \leq 3$. By contrast, the compute time grows exponentially when increasing the lookahead horizon $\ell$.

\begin{figure}
  \centering
\scalebox{0.65}{
    \includegraphics{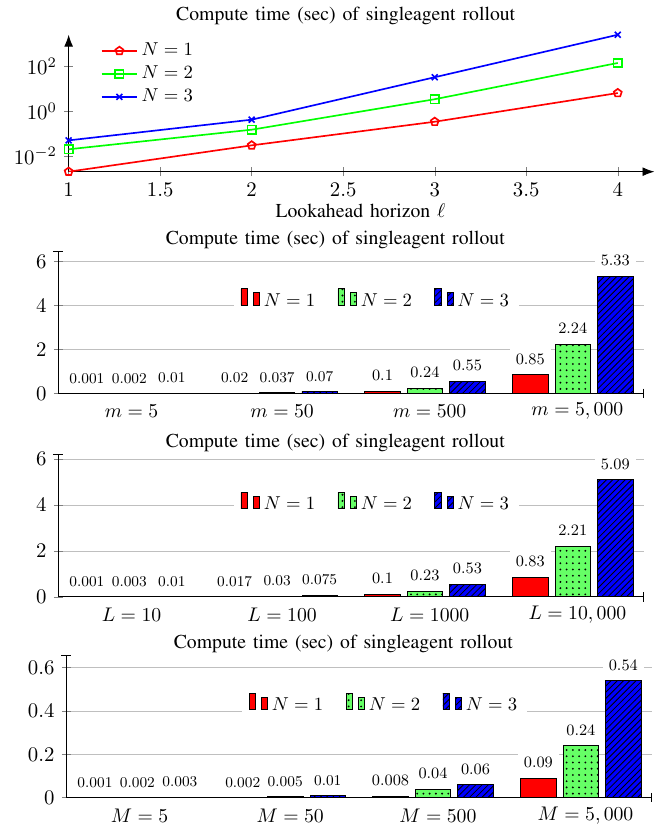}    
  }
  \caption{Compute time of rollout with different configurations of the lookahead horizon $\ell$, the rollout horizon $m$, the number of simulations $L$, and the number of particles $M$ for the singleagent rollout method; cf.~Eqs.~\eqref{eq:rollout} and \eqref{eq:estimate_belief}. For all policies, we use particle filtering to implement the belief estimator; cf.~Eq.~(\ref{eq:estimate_belief}). Unless indicated otherwise, we use lookahead horizon $\ell=1$, rollout horizon $m=5$, $L=10$ simulations, and $M=50$ particles.}\label{fig:scale2}
\end{figure}

Lastly, Fig. \ref{fig:ablation} presents a comparison between singleagent and multiagent rollout policies with different belief estimators. The results show that while all rollout policies offer a substantial improvement over the base policy, the difference in performance between singleagent rollout and multiagent rollout is marginal. Moreover, the performance difference when changing the belief estimator is relatively low.

\begin{figure}
  \centering
\scalebox{0.65}{
    \includegraphics{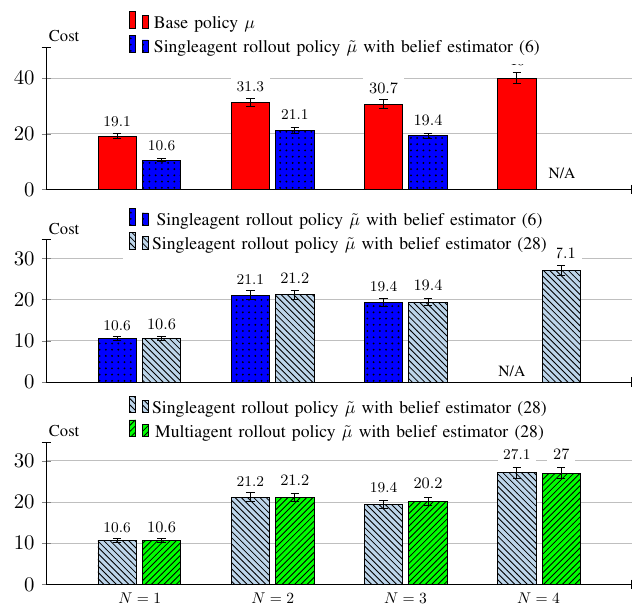}    
  }
\caption{Simulation scenario $1$: cost of the base policy $\mu$ and rollout policies with different belief estimators. Bar groups indicate the number of replicas $N$. Numbers and error bars indicate the mean and the standard deviation from $100$ simulations with different random seeds. Each simulation includes $100$ time steps. All rollout methods are instantiated with $M=50$ particles, lookahead horizon $\ell=1$, rollout horizon $m=5$, and $L=10$ simulations. The bar colors relate to different control policies. A missing bar and the text ``N/A'' indicates that a result could not be obtained within $2$ hours of computation.}\label{fig:ablation}
\end{figure}

\section{Experimental Testbed}\label{app:testbed}
We implement the replicated computing system in Fig.~\ref{fig:example} as a proof-of-concept on our testbed. The implementation includes three layers, as described below.

\vspace{2mm}
\noindent\textit{\textbf{The physical layer.}}
The physical layer contains a cluster with $13$ nodes connected through an Ethernet network; see Fig.~\ref{fig:dt_serverrack}. Specifications of the nodes can be found in \cite[Table 4.4]{kim_phd_thesis}. Nodes communicate via message passing over authenticated channels. Each node runs (\textit{i}) a service replica in a Docker container \cite{docker}; (\textit{ii}) a control agent $\mu^i$; and (\textit{iii}) the Snort intrusion detection system with ruleset v2.9.17.1 \cite{snort}. The container images are available in  \cite[Table 4.5]{kim_phd_thesis}.

Network connectivity between replicas is emulated with virtual links implemented by Linux bridges and network namespaces, which create copies of the physical host's network stack. If an emulated network spans multiple physical servers, the traffic is tunneled over the physical network using VXLAN. In other words, the physical network provides a substrate, on top of which the emulated networks are overlaid. Replicas are interconnected through Gbit/s connections with $0.05\%$ packet loss (emulated with NetEm). They are connected to clients via $100$ Mbit/s connections with $0.1\%$ packet loss.

\vspace{2mm}
\noindent\textit{\textbf{The virtualization layer.}} Each service replica is a state machine and runs a web service; see Schneider for a tutorial on state machines \cite{state_machine_reference}. This service offers two deterministic operations: (\textit{i}) a \textit{read operation}, which returns the current state of the service; and (\textit{ii}) a \textit{write operation}, which updates the state. Replicas run the MINBFT consensus protocol to coordinate these operations; see \cite{6081855} for details about MINBFT. The throughput of our implementation of MINBFT is analyzed in \cite[Fig. 4.13]{kim_phd_thesis}. Clients access the service by issuing requests that are sent to all replicas. Each request has a unique identifier that is digitally signed. After sending a request, the client waits for a quorum of $f+1$ identical replies with valid signatures. 

The client population is emulated through processes that send service requests to the replicas. Client arrivals are controlled by a Poisson process with exponentially distributed service times. The sequence of service invocations is selected according to a Markov process; see \cite{kim_phd_thesis} for details.

We emulate two types of failure events at each time step: crashes and cyberattacks. The attacks are listed in \cite[Tables 4.5-4.6]{kim_phd_thesis}. A crashed replica does not participate in the consensus protocol until it is recovered. After compromising a replica through an emulated attack, we randomly choose between a) having the replica participate in the consensus protocol; b) not participating; and c) participating with random messages.

\vspace{2mm}
\noindent\textit{\textbf{The control layer.}}
Control agents collect failure alerts from Snort \cite{snort} and decide when to recover replicas. When a replica is recovered, it starts with a new container, and its state is initialized with the (identical) state from $f+1$ other replicas.

\bibliographystyle{IEEEtran}
\bibliography{references}

\end{document}